\documentclass[10pt,A4]{article}

\usepackage{bbm}

\usepackage{mdframed}
\usepackage{hyperref}

\usepackage[margin=2cm]{geometry}

\usepackage{amsfonts,amsthm,amssymb}
\usepackage{amsmath}

\begin{document}

\setcounter{page}{0}

\title{Approximate  Aggregate Queries Under
Additive Inequalities}

\author{Mahmoud Abo-Khamis \\ Relational AI \\ mahmoud.abokhamis@relational.ai \and Sungjin Im~\thanks{Supported in part by NSF grants CCF-1409130, CCF-1617653, and CCF-1844939.} \\ University of California, Merced \\ sim3@ucmerced.edu \and Benjamin Moseley~\thanks{Supported in part by NSF grants CCF-1725543, 1733873, 1845146, a Google Research Award, a Bosch junior faculty chair and an Infor faculty award.} \\ Carnegie Mellon University \\ moseleyb@andrew.cmu.edu \and Kirk Pruhs~\thanks{Supported in part by NSF grants CCF-1421508 and CCF-1535755, and an IBM Faculty Award.} \\ University of Pittsburgh \\ kirk@cs.pitt.edu \and Alireza Samadian \\ University of Pittsburgh \\ samadian@cs.pitt.edu}

%%% ==== Function names
\newcommand{\functionname}[1]{\text{\sf #1}}
\newcommand{\fhtw}{\functionname{fhtw}}
\newcommand{\tw}{\functionname{tw}}
\newcommand{\atoms}{\text{atoms}}
\newcommand{\arity}{\text{arity}}
\newcommand{\vars}{\textnormal{vars}}

%%% ==== Common Abbreviations
\newcommand{\calH}{\mathcal H}
\newcommand{\calV}{\mathcal V}
\newcommand{\calE}{\mathcal E}
\newcommand{\calD}{\mathcal D}

\newcommand{\D}{\mathbf{D}}

%%% ==== Common shortcuts
\newcommand{\be}{\begin{enumerate}}
\newcommand{\ee}{\end{enumerate}}
\newcommand{\bi}{\begin{itemize}}
\newcommand{\ei}{\end{itemize}}
\newcommand{\beq}{\begin{equation}}
\newcommand{\eeq}{\end{equation}}

\newcommand{\bp}{\begin{proof}}
\newcommand{\ep}{\end{proof}}
\newcommand{\bcor}{\begin{cor}}
\newcommand{\ecor}{\end{cor}}
\newcommand{\bthm}{\begin{thm}}
\newcommand{\ethm}{\end{thm}}
\newcommand{\blmm}{\begin{lmm}}
\newcommand{\elmm}{\end{lmm}}
\newcommand{\bdefn}{\begin{defn}}
\newcommand{\edefn}{\end{defn}}
\newcommand{\bprop}{\begin{prop}}
\newcommand{\eprop}{\end{prop}}
\newcommand{\bconj}{\begin{conj}}
\newcommand{\econj}{\end{conj}}
\newcommand{\bopm}{\begin{opm}}
\newcommand{\eopm}{\end{opm}}
\newcommand{\brmk}{\begin{rmk}}
\newcommand{\ermk}{\end{rmk}}

\newcommand{\suchthat}{\ | \ }

%%% ==== Theorems and stuff
\theoremstyle{plain}                   % default
\newtheorem{theorem}{Theorem}
\newtheorem{lemma}[theorem]{Lemma}
\newtheorem{definition}[theorem]{Definition}
\newtheorem{corollary}[theorem]{Corollary}

\theoremstyle{definition}              % Examples and all
\newtheorem{pbm}{Problem}
\newtheorem{assumption}{Assumption}
\newtheorem{opm}{Open Problem}
\newtheorem{conj}{Conjecture}
\newtheorem{ex}{Example}
\newtheorem{exer}{Exercise}
\newtheorem{defn}{Definition}
\newtheorem{alg}{Algorithm}
\newtheorem{rmk}{Remark}
\newtheorem{claim}{Claim}
\newtheorem{fact}{Fact}

\maketitle

\begin{abstract}
We consider the problem of evaluating certain types of  functional aggregation queries  on relational data subject to additive inequalities.
Such aggregation queries, with a smallish number of additive inequalities, arise naturally/commonly in many applications, particularly in 
learning applications. 
We give a relatively complete categorization of the computational complexity of such problems. 
We first show that the problem is NP-hard, even in the case of one additive inequality.
Thus we turn to approximating the query. 
Our main result is an efficient algorithm for approximating, with arbitrarily small relative error,
many natural aggregation queries with one additive inequality.
We give examples of natural queries that can be efficiently solved using this algorithm.
In contrast we show that the situation with two additive inequalities is quite different, by showing
that it is NP-hard to evaluate simple aggregation queries, with two additive inequalities, with any bounded relative error.
\end{abstract}

\newpage

\section{Introduction}

Kaggle  surveys~\cite{KaggleSurvey} show that 
the majority of learning tasks faced by  data scientists involve relational data. Most commonly the relational data is stored in tables in a relational database. So these data scientists want to compute something like
\begin{quote}
\textbf {Data Science Query} =  Standard\_Learning\_Task(Relational Tables $T_1, \ldots T_m$)
\end{quote}
However, almost all standard algorithms for standard
learning problems 
assume that the input consists of points in Euclidean space~\cite{HundredPage},
and thus 
are not designed to  operate directly on relational data. 
The current standard practice for a data scientist, confronted with a learning task on relational data, is: 

\begin{enumerate}
\item
Firstly, convert any nonnumeric categorical data to numeric data.  As there are standard methods to accomplish this~\cite{HundredPage}, and
as we do not innovate with respect to this process, we will assume that all data is a priori numerical, so we need not  consider this step. 
\item
Secondly, issue a feature extraction query to extract the  data from the relational database by joining together the tables to materialize a design matrix $J=T_1 \Join \dots \Join T_m$ with say $N$ rows and $d$ columns/features.  Each row of this design matrix is then interpreted as a point in $d$-dimensional Euclidean space. 
\item
Finally this design matrix $J$ is important into a  standard learning algorithm to train the model.
\end{enumerate}
Thus conceptually, standard  practice transforms a  data science query to a query of the following form:
\begin{quote}
\textbf {Data Science Query} =  Standard\_Learning\_Algorithm(Design Matrix $J = T_1 \Join \dots \Join T_m$)
\end{quote}
where the joins are evaluated first, and the learning algorithm is then applied to the result. 
Note that if each table has $n$ rows, the design
matrix $J$ can have as many as $n^m$ entries. 
Thus, independent of the learning task, this standard practice necessarily has exponential worst-case time and space complexity as the design matrix can be exponentially larger than the underlying relational tables. 
Thus a natural research question is what we call the relational learning question:
\begin{quote}
\textbf {Relational Learning Question:}  What standard learning problems admit relational algorithms, which informally
are algorithms that are efficient when the data is in relational form? 
\end{quote}
Note that a relational algorithm can not afford to explicitly join the relational tables.

In this paper we consider the relational learning question in the context of 
the problem of evaluating Functional Aggregate Queries (FAQ's) subject to 
additive constraints, which we call
FAQ-AI queries. Such queries/problems, with a smallish number of inequalities, arise naturally as subproblems in many  standard learning algorithms. 
Before formally defining an FAQ-AI query, let us  start with some examples. The first collection of examples are  related to the classic Support Vector Machines problem (SVM),
in which points are classified
based on the side of a hyperplane that the point lies on.~\cite{HundredPage,Shalev-Shwartz} Each of the following examples can be reduced to  FAQ-AI queries
with one additive inequality:
\begin{itemize}
    \item Counting the number of points correctly (or incorrectly) classified by a hyperplane.
    \item Finding the minimum distance of a 
    correctly classified point to the boundary of a given hyperplane.
    \item
    Computing the gradient of the SVM objective function at a particular point. 
\end{itemize}
And now we give some examples of problems related to the classic $k$-means clustering problem~\cite{Shalev-Shwartz}, in which the goal is to find
locations for $k$ centers so as to minimize the aggregate 2-norm squared distance from
each point to its closest center. Each of the following
examples can be reduced to FAQ-AI queries with $k-1$ inequalities:
\begin{itemize}
\item
Evaluating the $k$-means objective value for a particular collection of $k$ centers.
    \item Computing the new centers in one iteration of the commonly used Lloyd's algorithm.
    \item
Computing the furthest point in each cluster from the center of that cluster. 
\end{itemize}
All of these problems are  readily solvable in nearly linear time in the size of the input 
if the input was the design matrix.
Our goal  to determine whether relational algorithms
exist for 
such FAQ-AI problems
when the input is in relational form. 

One immediate difficulty that we run into is that
if the tables have a sufficiently complicated structure, almost all natural problems/questions
about the design matrix are NP-hard if the data is in relational form. For example, it is NP-hard to even determine whether or not the design matrix is empty or not (see for example \cite{grohe2006structure,marx2013tractable}). 
Thus, as we want to focus on the complexity of the functional aggregate query and the additive
inequalities, we conceptually want to   abstract out the complexity 
of the tables. The simplest way to accomplish this is to  primarily focus on
 instances where the structure of the tables is
simple, with the most natural candidate for ``simplicity'' being that the join is acyclic. Acyclic joins are the norm in practice, and are a commonly considered special case in the database literature. 
For example,  there are efficient algorithms to compute the size of the design matrix for acyclic joins.  

Formally defining what an ``relational'' algorithm is
problematic, as for each  natural candidate definition there
are plausible scenarios in which that candidate definition is not the ``right'' definition. But for the purposes of this paper
it is sufficient to think of a ``relational'' algorithm as one whose
runtime is polynomially bounded in $n$, $m$ and $d$ if the join is acyclic.

\subsection{Formal Definitions}

Unfortunately, a formal definition of FAQ-AI is rather cumbersome,
and notation heavy. To aid the reader, after the formal definitions, we give some examples of how to model some of the
particular  learning problems discussed earlier, as FAQ-AI problems. Due to space limitations some definitions, that are more standard or less critical, and some proofs, have been moved to the appendix.

The input to FAQ-AI problem consists of three components:
\begin{itemize}
\item
A collection of relational tables $T_1, \ldots T_m$ with real-valued entries. Let
$J = T_1 \Join T_2 \Join \dots \Join T_m$ be the
design matrix that arises from the inner join of the tables. 
Let $n$ be an upper bound on the number of rows in any table $T_i$,
let $N$ be the number of rows in $J$, and let $d$ be the number of columns/features in $J$. 
% We view each row of $J$ as a point in $\mathbb{R}^d$.
\item
An FAQ query $Q(J)$ that is either a SumProd query or a SumSum query. 
We define a SumSum query to be a query of the form:
\begin{align*}
   Q(J) = \bigoplus_{x \in J} \bigoplus_{i=1}^d F_{i}(x_i)
\end{align*}
% where  $(R, \oplus, I_0)$ is a commutative monoid over a subset $R$ of the reals $\mathbb{R}$ with identity $I_0$.
where $(R, \oplus, I_0)$ is a commutative monoid over the arbitrary set $R$ with identity $I_0$.
We define a SumProd query to be a query of the form: 
\begin{align*}
   Q(J) =  \bigoplus_{x \in J} \bigotimes_{i=1}^d F_{i}(x_i)
\end{align*}
% where $(R, \oplus, \otimes, I_0, I_1)$ is a commutative semiring over a subset $R$ of the reals $\mathbb{R}$ with additive identity $I_0$
where $(R, \oplus, \otimes, I_0, I_1)$ is a commutative semiring over the arbitrary set $R$ with additive identity $I_0$
and multiplicative identity $I_1$. 
In each case, $x$ is a row in the design matrix $J$, $x_i$ is the entry in
column/feature $i$ of $x$, and
 each $F_{i}$ is an arbitrary (easy to compute) function with range $R$.
\item
A collection ${\mathcal L} = \{ (G_1, L_1), \ldots (G_b, L_b) \}$ of additive inequalities,
where $G_i$ is a collection $\{g_{i,1}, g_{i,2}, \ldots g_{i, d} \}$ of $d$  (easy to compute) functions that map the column domains to the reals, 
 and each $L_i$ is a real number. 
 A row $x \in J$ satsifies the additive inequalities in
$\mathcal L$ if for all $i \in [1, b]$, it is the case that $\sum_{j=1}^d g_{i,j}(x_j) \le L_i$.
 
\end{itemize}
FAQ-AI($k$) is a special case of FAQ-AI when
the cardinality of $\mathcal L$ is at most $k$.

The output for the FAQ-AI problem is the result
of the query on the subset of the design matrix 
that satisfies the additive inequalities. 
That is, the output for the FAQ-AI instance with a SumSum query is:
\begin{align}
\label{equality:rfaqli-sumsum}
      Q({\mathcal L}(J)) = \bigoplus_{x \in {\mathcal L}(J)} \bigoplus_{i=1}^d F_{i}(x_i)
\end{align}
And the output for the FAQ-AI instance with a SumProd query is:
\begin{align}
\label{equality:rfaqli-sumprod}
    Q({\mathcal L}(J)) = \bigoplus_{x \in {\mathcal L}(J)} \bigotimes_{i=1}^d F_{i}(x_i)
\end{align}
Here ${\mathcal L}(J)$ is the set of  $x \in J$ that satisfy the additive inequalities in
$\mathcal L$. 
To aid the reader in appreciating these definitions, we now illustrate how some of the SVM related problems in the introduction can be reduced to FAQ-AI(1). 

\paragraph{Counting  the number of negatively labeled points correctly classified by a linear separator:}  Here each 
row $x$ of the design matrix $J$ conceptually consists of a point in $\mathbb{R}^{d-1}$, whose
coordinates are specified by the first $d-1$ columns in $J$, 
and a label in $\{1,-1\}$ in column $d$. Let  the  linear separator be defined by $\beta \in \mathbb{R}^{d-1}$.  A negatively labeled point $x$ is correctly classified if $\sum_{i=1}^{d-1} \beta_i x_i \leq 0$.  The number of such points can be counted using SumProd query with
one additive inequality as follows:
$\oplus$ is addition, 
$\otimes$ is multiplication, 
$F_i(x_i) = 1$ for all $i \in [d-1]$, 
$F_d(x_d) = 1$ if $x_d = -1$, and $F_d(x_d) = 0$ otherwise,
$g_{1,j}(x_j)= \beta_j x_j$ for $j \in [d-1]$, 
$g_{1,d}(x_d) = 0$, and
$L_1 =0$.

\paragraph{Finding the minimum distance to the linear separator of a correctly classified negatively labeled point:} 
This distance can be computed using a SumProd query with one additive inequality as follows:
$\oplus$ is the binary minimum operator, 
$\otimes$ is addition, 
$F_i(x_i) = \beta_i x_i$ for all $i \in [d-1]$, 
$F_d(x_d) = 1$ if $x_d = -1$, and $F_d(x_d) = 0$ otherwise,
$g_{1,j}(x_j)= \beta_j x_j$ for $j \in [d-1]$, 
$g_{1,d}(x_d) = 0$, and
$L_1 =0$.

\subsection{Related Results}

% Functional Aggregate Queries, or FAQs~\cite{faq}, form a wide class of queries subsuming database joins, projections, and aggregations among many others. SumProd query is a special type of FAQ and it is referred as Single Semiring FAQ in \cite{faq}. 
The Inside-Out algorithm~\cite{faq} can evaluate a SumProd query in time $O(m d^2 n^{h} \log n)$, where  $h$ is the fractional hypertree width~\cite{GM06} of the query (assuming unit time per semiring operation). 
Note that $h=1$ for the acyclic joins, and 
thus Inside-Out is a polynomial time algorithm for acyclic joins.
One can reduce SumSum queries to $m$ SumProd queries~\cite{faqai}, and thus they be solved in 
time $O(m^2 d^2 n^{h} \log n)$.
Conceptually the Inside-Out algorithm is a generalization of Dijkstra's shortest path algorithm. We explain Inside-Out in more detail in Section \ref{sect:insideout}. 
The Inside-Out algorithm builds on many earlier papers, including~\cite{Aji:2006,Dechter:1996,Kohlas:2008,GM06}.

FAQ-AI queries were introduced in \cite{faqai}. \cite{faqai} gave an algorithm with worst-case time
complexity $O(m d^2 n^{m/2} \log n)$. So this is
better than the standard practice of forming the design matrix, which has worst-case time complexity $\Omega(d n^m)$. Conceptually the improvement over standard practice arises because the algorithm  in \cite{faqai} avoids the last join.

Different flavors of queries with inequalities were also studied~\cite{Klug:1988:CQC:42267.42273,Koutris2017,DBLP:journals/corr/abs-1712-07445}.
 Relational algorithms for 
 linear/polynomial regression are considered in
 \cite{SystemF,khamis2018ac,IndatabaseLinearRegression,Kumar:2015:LGL:2723372.2723713,Kumar:2016:JJT:2882903.2882952}. An  algorithm for 
$k$-means on relational data is given in ~\cite{Rkmeans}.

\subsection{Our Results}

We start by showing in Section \ref{section:nphard1} that the FAQ-AI(1) problem is $\#P$-hard, even for the problem of counting the number of rows in the design matrix for a cross product join. Thus a relational
algorithm for FAQ-AI(1) queries
is extraordinarily unlikely as it would imply $P = \#P$.

Thus we turn to approximately computing FAQ-AI queries. 
An ideal result would be what we call 
a \emph{Relational Approximation
Scheme}  (RAS), which is a collection $\{ A_\epsilon \}$ of relational algorithms, one for each real $\epsilon > 0$, 
such that each $A_\epsilon$ is  outputs a solution  that has
relative error at most $\epsilon$. 
Our main result is a RAS for FAQ-AI(1) queries that have certain natural properties, which we now define. 

\begin{definition}~
\begin{itemize}
% \item
% An operator $\odot$ has error factor at most $j$
% if it is the case that when $x/(1+\delta_1) \le x' \le (1+\delta_1) x$ and
%  $y/(1+\delta_2) \le y' \le (1+\delta_2) y$ then 
%  $(x \odot y) / (1+\delta_1)^j(1+\delta_2)^j \le x' \odot y' \le (1+\delta_1)^j(1+\delta_2)^j (x \odot y)$.
 \item
 An operator $\odot$ has bounded error if it is the case that when $x/(1+\delta_1) \le x' \le (1+\delta_1) x$ and
 $y/(1+\delta_2) \le y' \le (1+\delta_2) y$ then 
 $(x \odot y) / ((1+\delta_1)(1+\delta_2))\le x' \odot y' \le (1+\delta_1)(1+\delta_2) (x \odot y)$.
 \item
 An operator introduces no error if it is the case that  when $x/(1+\delta_1) \le x' \le (1+\delta_1) x$ and
 $y/(1+\delta_2) \le y' \le (1+\delta_2) y$ then 
 $(x \odot y) / (1+\max(\delta_1,\delta_2)) \le x' \odot y' \le (1+\max(\delta_1,\delta_2)) (x \odot y)$..
 \item
 An operator $\odot$  is repeatable if
for any two non-negative integers $k$ and $j$ and any non-negative real $\delta$ such that $k/(1+\delta) \le j \le (1+\delta) k$, it is the case that for every $x \in R$, $(\bigodot^{k} x) / (1+\delta) \le \bigodot^{j} x \le (1+\delta) \bigodot^{k} x$.
\item
An operator $\odot$ is
 monotone if it is either monotone increasing or monotone decreasing. The operator $\odot$ is
 monotone increasing if $x \odot y  \ge \max(x, y)$.
 The operator $\odot$ is monotone decreasing if $x \odot y \le \min(x, y)$.
\end{itemize}
\end{definition}

\begin{theorem}
\label{thm:sumsummain}
There is a RAS to compute a SumSum FAQ-AI(1) query over a 
commutative monoid 
$(R, \oplus, I_0)$ if:
\begin{itemize}
    \item The domain $R$ is a subset of reals $\mathbb{R}$.
    \item The operators $\oplus$ and $\otimes$ can be computed in polynomial time. 
    \item $\oplus$ introduces no error. 
    \item
    $\oplus$ is repeatable. 
\end{itemize}
\end{theorem}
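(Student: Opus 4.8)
The plan is to reduce the query to polynomially many approximate \emph{counting} subproblems and then solve each of these by running the Inside-Out algorithm over a suitable semiring while replacing every intermediate semiring value by a small sketch. To set up the reduction, I would use commutativity and associativity of $\oplus$ to rewrite the desired output $Q(\mathcal{L}(J))=\bigoplus_{x\in\mathcal{L}(J)}\bigoplus_{i=1}^{d}F_{i}(x_i)$ as
\[
Q(\mathcal{L}(J)) \;=\; \bigoplus_{i=1}^{d}\ \bigoplus_{v}\ \Bigl(\bigoplus\nolimits^{c_i(v)} F_i(v)\Bigr),
\]
where the middle sum ranges over the values $v$ appearing in column $i$, $c_i(v)$ is the number of rows $x\in\mathcal{L}(J)$ with $x_i=v$, and $\bigoplus^{k}y$ denotes $y$ combined with itself $k$ times (so $\bigoplus^{0}y=I_0$). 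Suppose that for each pair $(i,v)$ we can (a) decide exactly whether $c_i(v)=0$, and (b) when $c_i(v)\ge 1$, produce an integer $\tilde c_i(v)$ with $c_i(v)/(1+\epsilon)\le\tilde c_i(v)\le(1+\epsilon)c_i(v)$. Since $\oplus$ is repeatable, $\bigoplus^{\tilde c_i(v)}F_i(v)$ then approximates $\bigoplus^{c_i(v)}F_i(v)$ within a factor $1+\epsilon$; and since $\oplus$ introduces no error, combining all $O(nd)$ of these approximations in any order still yields a factor $1+\epsilon$ -- this is exactly where ``introduces no error'' is essential, as it keeps the relative error from growing with the number of terms. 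Hence it suffices to solve (a) and (b) in relational time for every $(i,v)$.

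Now fix $(i,v)$. I would first restrict the domain of column $i$ to $\{v\}$ by filtering the tables (which leaves the join acyclic), and then replace each $g_{1,j}$ by $g_{1,j}-\bar m_j$ and $L_1$ by $L_1':=L_1-\sum_j\bar m_j$, where $\bar m_j$ is the minimum of $g_{1,j}$ over the values of column $j$ present in the tables; this changes neither the set of satisfying rows nor $c_i(v)$, but makes every $g_{1,j}$ nonnegative. Then $c_i(v)>0$ if and only if $\min_{x\in J}\sum_{j}g_{1,j}(x_j)\le L_1'$, which I decide exactly by running Inside-Out over the $(\min,+)$ semiring with $F_j:=g_{1,j}$ -- polynomial time because the join is acyclic -- and this settles (a). If $L_1'<0$ then $c_i(v)=0$; if $L_1'=0$ then $c_i(v)$ is the number of rows with $g_{1,j}(x_j)=0$ for every $j$, again an exact Inside-Out count; so the only remaining case is $L_1'>0$, with all $g_{1,j}\ge 0$.

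For that case, observe that the multiset $M$ of the values $\sum_j g_{1,j}(x_j)$ over $x\in J$, counted with multiplicity, is exactly the output of a SumProd query over the commutative semiring whose elements are finite multisets of reals, with $\oplus$ equal to multiset union (identity the empty multiset) and $\otimes$ equal to Minkowski sum of multisets (identity $\{0\}$), taking $F_j(x_j):=\{g_{1,j}(x_j)\}$; and $c_i(v)=|\{a\in M:a\le L_1'\}|$. I would run Inside-Out over this semiring, but represent every intermediate multiset by a polynomial-size \emph{sketch} -- its nondecreasing rank function, recorded only at ranks that are powers of $1+\delta$ for an inverse-polynomially small $\delta$ -- and implement $\oplus$ and $\otimes$ approximately on these sketches. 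Each sketched operation inflates the error along the rank axis by at most a factor $1+\delta$, and Inside-Out performs only polynomially many operations, so the final sketch determines $c_i(v)$ to within a factor $(1+\delta)^{\mathrm{poly}}\le 1+\epsilon$, which settles (b) and hence the theorem.

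The hard part is this last step: the rounding must be arranged so that it yields a \emph{relative} error in the count, rather than an additive error or merely ``the count at a perturbed threshold $(1\pm\epsilon)L_1'$''. Simply discretizing the functions $g_{1,j}$ to a grid gives only the latter, which can be off by an unbounded factor, since $\ell\mapsto|\{x:\sum_j g_{1,j}(x_j)\le\ell\}|$ can jump by more than $1+\epsilon$ at a single value of $\ell$; this is precisely the obstacle handled by known FPTASes for $\#$Knapsack, and the work here is to carry the ``round the running counts'' idea faithfully through the Inside-Out recursion -- in particular through marginalizing out a shared variable and through combining the messages of two subtrees -- while keeping every intermediate sketch of polynomial size and bounding how the multiplicative error accumulates over the recursion.
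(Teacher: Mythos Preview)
Your proposal is correct and follows essentially the same route as the paper: rewrite the SumSum query as $\bigoplus_i\bigoplus_v\bigoplus^{c_i(v)}F_i(v)$, approximate each count $c_i(v)$ via Inside-Out over the multiset (convolution) semiring with Dyer-style rank-axis sketching, then invoke repeatability and the no-error property to assemble the answer.

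Two minor points of comparison. First, the paper is more economical in how it obtains the counts: rather than filtering to $x_i=v$ and running a fresh Inequality Row Counting instance for each of the $O(nd)$ pairs $(i,v)$, it runs the sketched Inside-Out only $m$ times---once with each table $T_j$ as root---and then reads off all the per-value counts $U(x_i)$ for every feature $i\in C_j$ from the root table $\tilde T_j$ just before the final aggregation step. Your version is still polynomial, of course, but this is a factor-of-$n$ improvement worth knowing. Second, your shifting of the $g_{1,j}$ to be nonnegative and your separate $(\min,+)$ pass to detect $c_i(v)=0$ exactly are both unnecessary: the multiset semiring in the paper is over arbitrary reals, and the sketch satisfies $\triangle\mathbb{S}_\epsilon(A)(t)\le\triangle A(t)$, so an exact zero count is preserved as zero. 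These extras are harmless, just superfluous.
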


\begin{theorem}
\label{thm:sumprodmain}
There is a RAS to compute a SumProd FAQ-AI(1) query  over 
 a commutative semiring
$(R, \oplus, \otimes, I_0, I_1)$ if: 
\begin{itemize}
\item 
    The domain $R$ is 
    $\mathbb{R}^+ \cup \{I_0\} \cup \{I_1\}$.
    \item
    $I_0, I_1 \in \mathbb{R}^+ \cup \{+\infty\} \cup \{ - \infty \}$
      \item The operators $\oplus$ and $\otimes$ can be computed in polynomial time. 
    \item $\oplus$ introduces no error.
    \item $\otimes$ has bounded error.
    \item
    $\oplus$ is monotone.  An operator $\odot$ is
 monotone if it is either monotone increasing or monotone decreasing. 
 \item
 The log of the aspect ratio of the query is polynomially bounded. The aspect ratio is the ratio of the maximum,  over every possible submatrix of the design matrix, of the
 value of the query on that submatrix, to the minimum,
 over every possible submatrix of the design matrix,
 of the
 value of the query on that submatrix. 
    \end{itemize}
\end{theorem}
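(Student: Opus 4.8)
The plan is to reduce to a normalized instance and then solve it by a dynamic program over the join tree obtained by augmenting the Inside--Out algorithm (Section~\ref{sect:insideout}, \cite{faq}) with one extra coordinate that tracks, approximately, how much of the budget $L_1$ has been consumed so far. \textbf{Normalization.} For each column $j$ replace $g_{1,j}$ by $g_{1,j}-\min g_{1,j}$ (minimum over the values appearing in column $j$) and decrease $L_1$ accordingly; this leaves $\mathcal{L}(J)$ unchanged and makes every contribution $g_{1,j}(\cdot)\ge 0$, and either $L_1\ge 0$ or the answer is $I_0$. Delete domain values with $g_{1,j}(x_j)>L_1$ (they never occur in a satisfying row), round positive contributions below $\epsilon L_1/(md)$ up to that value, and dispose of the degenerate cases where $I_0$ or $I_1$ is $\pm\infty$ or where $J$ is empty (detectable in polynomial time for acyclic joins). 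After this every single contribution lies in $[0,L_1]$ and the ratio of the largest to the smallest nonzero contribution is $\mathrm{poly}(m,d,1/\epsilon)$.

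\textbf{An exact (pseudo-polynomial) algorithm.} Run Inside--Out on the rooted join tree, but give every dynamic-programming cell an extra index $s$ equal to the exact value of $\sum_j g_{1,j}(x_j)$ over the columns processed so far; combining subtree cells adds their $s$-indices together with the contribution of the current node, and at the root one returns $\bigoplus$ over all cells whose $s$-index is $\le L_1$. This is correct and uses $\oplus$ only as Inside--Out does, but the number of distinct $s$-values may be exponential.

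\textbf{Discretization.} Replace the exact index $s$ by the least power of $(1+\delta)$ that is $\ge s$ (for a $\delta$ polynomially small in $\epsilon/(md\log\rho)$, where $\rho$ is the aspect ratio), re-rounding after each combination. By the normalization there are only $\mathrm{poly}(m,d,1/\delta)$ levels, so the program runs in polynomial time for acyclic joins. A tuple is kept iff its repeatedly rounded index is $\le L_1$, i.e.\ iff its true $g$-sum is at most $(1+\delta)^{O(md)}L_1=(1+\gamma)L_1$; hence the computed value is $Q$ on a set sandwiched between $\mathcal{L}_{L_1}(J)$ and $\mathcal{L}_{(1+\gamma)L_1}(J)$, and since $\oplus$ is monotone, the computed value lies between $Q(\mathcal{L}_{L_1}(J))$ and $Q(\mathcal{L}_{(1+\gamma)L_1}(J))$. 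When $\oplus$ is of $\min/\max$ type one additionally rounds the $F_i$-values (equivalently the partial products) onto a geometric grid of $\mathrm{poly}(\log\rho,1/\epsilon)$ levels, which is legitimate because $\otimes$ has bounded error and keeps the number of distinct cells polynomial; this is reminiscent of the FPTAS for budget-constrained shortest paths.

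\textbf{Error analysis (the crux).} It remains to show that shrinking the effective budget by the factor $(1+\gamma)$, together with the rounding of intermediate semiring values, perturbs the output by at most a factor $1+\epsilon$. The hypothesis ``$\otimes$ has bounded error'' bounds the amplification from rounding $F_i$-values to $(1+\delta)^{O(d)}$; ``$\oplus$ introduces no error'' guarantees that the final $\bigoplus$ over the surviving cells does not amplify per-cell relative errors; and the polynomially bounded $\log\rho$, together with the monotonicity of $\oplus$, is what is needed to bound the contribution of the extra tuples in the shell $\mathcal{L}_{(1+\gamma)L_1}(J)\setminus\mathcal{L}_{L_1}(J)$ relative to $Q(\mathcal{L}_{L_1}(J))$. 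Combining these and taking $\delta$ small enough yields relative error $\le\epsilon$. I expect this last step --- converting multiplicative slack in the tracked budget coordinate into a provable relative-error guarantee, uniformly over all four hypothesized properties and presumably with the argument splitting according to whether $\oplus$ is monotone increasing or monotone decreasing --- to be the main technical obstacle; everything before it is bookkeeping on top of Inside--Out and the FAQ-AI framework of \cite{faqai}.
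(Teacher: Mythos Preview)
Your plan has the right overall shape (augment Inside--Out with one extra coordinate, then sketch), but the coordinate you choose to discretize is the wrong one, and this is a genuine gap, not a bookkeeping issue.

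You round the \emph{constraint} value $s=\sum_j g_{1,j}(x_j)$ onto a geometric grid, so the set of rows your algorithm aggregates over is sandwiched between $\mathcal L_{L'}(J)$ and $\mathcal L_{L''}(J)$ for two thresholds $L',L''$ within a factor $(1+\gamma)$ of $L_1$. But the map $L\mapsto Q(\mathcal L_L(J))$ can jump by an arbitrary multiplicative factor at a single threshold; nothing in the hypotheses rules this out. Concretely, take $\oplus=+$, $\otimes=\times$, $F_i\equiv 1$ (row counting satisfies all of the theorem's hypotheses), and let $J$ be the cross product of $T_1=\{L_1/2\}$ with $T_2=\{L_1/2,\ L_1/2+\eta,\ldots,L_1/2+\eta\}$ containing one copy of $L_1/2$ and $N$ copies of $L_1/2+\eta$. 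Then $Q(\mathcal L_{L_1}(J))=1$ while $Q(\mathcal L_{(1+\gamma)L_1}(J))=N+1$ for any $\eta\le\gamma L_1$. Your ``polynomially bounded $\log\rho$'' hypothesis only says $N+1\le\rho$, i.e.\ $\log(N{+}1)$ is polynomial; it does not keep $N$ small. The same failure mode occurs for $\oplus=\min$: a single row in the shell can have a value that is smaller than the true optimum by a factor as large as the aspect ratio. In short, multiplicative slack in the budget does not translate into multiplicative slack in the query value, and the step you flag as ``the main technical obstacle'' is in fact not surmountable along this route.

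What the paper does instead is sketch along the \emph{value} axis. One carries, at every node of the Inside--Out computation, a set $A$ of pairs $(e,v)$ with $e$ the exact partial $g$-sum and $v$ the partial product, and defines $\triangle A(\ell)=\bigoplus_{e\le\ell}\#A(e)$, which is monotone in $\ell$ because $\oplus$ is monotone. The $\epsilon$-sketch then merges consecutive $e$-values whose cumulative $\triangle A$ lies in the same power-of-$(1+\epsilon)$ bucket; this preserves $\triangle A(\ell)$ to within $(1+\epsilon)$ \emph{for every} $\ell$, in particular for $\ell=L_1$, so no budget slack ever enters the argument. The aspect-ratio hypothesis is used only to bound the number of such buckets, hence the size of each sketch and the running time; it plays no role in the error bound. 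The ``$\oplus$ introduces no error'' and ``$\otimes$ has bounded error'' hypotheses are then exactly what is needed to show that the sketched $\sqcup$ and $\sqcap$ operations keep $\triangle(\cdot)$ within a $(1+\alpha)$ factor at each step, and the total error after $O(m^2\log n)$ operations is controlled by choosing $\alpha$ accordingly.
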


In Section  \ref{sect:insideout}
we review the Inside-Out algorithm for SumProd queries over acyclic joins, as
our algorithms will use the Inside-Out algorithm.
    
In section \ref{section:rowcount},
we explain how to obtain a RAS for
a special type of FAQ-AI(1) query, an Inequality Row Counting Query, that
counts the number of rows
in the design matrix that satisfy a given
additive inequality.
A even more special case of an Inequality Row Counting query is counting the number of points that lie on a given side of a given hyperplane. 
Our algorithm for  Inequality Row Counting can be viewed as a reduction to the problem of 
evaluating a general SumProd query   (without any additive inequalities), over a more complicated type of semiring, that we call a \emph{dynamic programming semiring}. 
In a dynamic programming semiring the base elements can be thought of as arrays, and the summation and product operations are designed so that the SumProd query computes a desired dynamic program.
In the case of Inequality Row Counting,  
our SumProd query essentially ends up implementing the dynamic program for Knapsack Counting from \cite{Dyer03approximatecounting}.
Given the widespread utility of dynamic programming as a algorithm design technique, it seems to us likely that the use of dynamic programming semirings will be useful in designing relational algorithms for other problems. 
Connections between semirings and dynamic programming have certainly been observed in the past. But the references we could find mostly observed that some algorithms can be generalized to work over semirings, for example Dijkstra's algorithm is known to work over certain types of semirings~\cite{Mohri02}. We couldn't find references in the literature that designed semirings to compute particular dynamic programs (although it would hardly be shocking if such references were existed, and we welcome pointers if reviewers know of any such references).

The time complexity of our algorithm for Inequality Row Counting is at most $O\left(  \frac{m^6  \log^4 n}{\epsilon^2} \right)$
times the time complexity of the algorithm in \cite{faq} for evaluating an SumProd query without additive inequalities,
which is $O(d^2 m n \log n)$ if the join is acyclic. Thus for acyclic joins the running time of our algorithm is
$O\left(  \frac{m^7  d^2 n \log^5 n}{\epsilon^2} \right)$.
Note that in most natural instances of interest, $n$ is orders of magnitude larger than $d$ or $m$.
Thus it is important to note that the running time of our algorithm is nearly linear in the dominant parameter $n$.  Finally we end Section \ref{section:rowcount} by showing how use this Inequality Row Counting RAS to obtain a RAS for SumSum FAQ-AI(1) queries covered by Theorem \ref{thm:sumsummain}. 

In Section \ref{section:generalupper} we explain how to generalize our RAS  for SumProd FAQ-AI(1) queries covered by Theorem \ref{thm:sumprodmain}.

 In Section \ref{sect:applications} we show several applications of our main result to obtain RAS for several natural problems. And we give a few examples where our main result does not apply.

In section \ref{section:approximatehardness} we show that the problem of obtaining $O(1)$-approximation for 
a row counting query with two additive inequalities is NP-hard, even
for acyclic joins.  So this shows that our result for FAQ-AI(1) cannot be extended to FAQ-AI(2), and that a relational algorithm
with bounded relative error for row counting with two additive inequalities
 is quite unlikely, as such an algorithm would
imply $\mathrm{P}=\mathrm{NP}$.

\section{NP-Hardness of FAQ-AI(1)}
\label{section:nphard1}

%In this section we are considering the problem of calculating an FAQ-AI query with one linear inequality. 
% The input is a $d$ dimensional vector $\beta$, a constant real number $L$ , an inner join query $J = T_1 \Join T_2 \Join \dots \Join T_m$ and the output is the value of Equality \eqref{equality:rfaqli}.
%In the following we show it is NP-Hard to count the number of rows in $J$ that satisfy a linear inequality in polynomial time. Note that counting the number of rows satisfying a linear inequality is a very special case of FAQ-AI(1).

\begin{theorem}
\label{theorem:nphardness:exact}
The problem of evaluating a FAQ-AI(1) query is $\#P$-Hard. 
\end{theorem}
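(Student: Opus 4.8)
The plan is to exhibit a counting problem known to be $\#P$-hard and realize it as the evaluation of a plain row-counting FAQ-AI(1) query on a cross-product join, so that all of the hardness is attributable to the single additive inequality rather than to the join structure. The natural target is counting the number of feasible $0/1$ knapsack solutions: given weights $a_1,\dots,a_n\in\mathbb{Z}_{\ge 0}$ and a bound $b$, count the $x\in\{0,1\}^n$ with $\sum_{i=1}^n a_i x_i\le b$.

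Concretely, from such an instance I would build $m=n$ tables, where $T_i$ has a single column (feature $i$) containing exactly the two rows $0$ and $1$. The inner join $J=T_1\Join\cdots\Join T_m$ is then the full cross product, and its $N=2^n$ rows are in bijection with the assignments $x\in\{0,1\}^n$. I take the SumProd query with $\oplus$ real addition, $\otimes$ real multiplication, $I_0=0$, $I_1=1$, and $F_i\equiv 1$ for every $i$; this query simply counts the rows of whatever set it is applied to. The single additive inequality uses $g_{1,i}(v)=a_i v$ for $i\in[n]$ and threshold $L_1=b$, so that $\mathcal{L}(J)=\{x\in\{0,1\}^n:\sum_{i=1}^n a_i x_i\le b\}$. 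Hence $Q(\mathcal{L}(J))$ is exactly the knapsack count, and the instance is produced in time polynomial in $n$ and the bit-lengths of the $a_i$ and $b$.

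To complete the argument I would observe that counting knapsack solutions with a ``$\le$'' constraint is itself $\#P$-hard: if $N(\le t)$ denotes the number of subsets of weight at most $t$, then for integer weights the number of subsets of weight \emph{exactly} $t$ equals $N(\le t)-N(\le t-1)$, so two oracle calls reduce $\#\textsc{SubsetSum}$ (which is $\#P$-complete via the standard parsimonious reduction from $\#\textsc{Sat}$) to knapsack counting. Composing the reductions, a polynomial-time algorithm for FAQ-AI(1) evaluation would yield one for $\#\textsc{SubsetSum}$, proving Theorem~\ref{theorem:nphardness:exact}. There is essentially no technical obstacle here; the only points requiring a little care are that the bare row count is a legitimate FAQ (it is, since each $F_i$ is the constant $I_1$), and that ``$\#P$-hard'' is intended under Turing reductions, which is precisely what the subtraction step provides (a parsimonious version could instead be obtained by starting from a $\#P$-complete problem that already reduces parsimoniously to the ``$\le$'' knapsack count).
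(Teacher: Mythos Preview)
Your proposal is correct and follows essentially the same reduction as the paper: both reduce Knapsack Counting to a row-counting FAQ-AI(1) over a cross-product of two-row single-column tables. The only cosmetic difference is that the paper stores the weights $w_i$ directly in the table entries and uses the all-ones coefficient vector in the inequality, whereas you store $\{0,1\}$ in the tables and push the weights into the $g_{1,i}$ functions; your version is also a bit more explicit about why the ``$\le$'' variant of knapsack counting is $\#P$-hard.
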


\begin{proof}
We prove the $\#$-hardness by a reduction from the $\#P$-hard Knapsack Counting problem. An instance for Knapsack consists of a collection $W =\{w_1,\dots,w_d\}$ of nonnegative integer weights, and a nonnegative integer weight $C$. The output should be the number of subsets of $W$ with aggregate weight at most $C$.

 We construct the instance of FAQ-AI as follows. We creating $d$ tables. Each table $T_i$ has one column and  two rows,
 with entries $0$ and $w_i$. Then 
 $J = T_1 \Join T_2 \Join \dots \Join T_d$
is the cross product join of the tables.
We define $\beta$ to be the $d$ dimensional vector with $1$'s in all dimensions, and the additive inequality to $\beta \cdot x \leq C$. 
Then note that there is then a natural bijection between the rows in $J$ that satisfy this inequality the  subsets of $W$ with aggregate weight at most $C$. 
\end{proof}

\section{The Inside-Out Agorithm for Acyclic Queries}
\label{sect:insideout}

In this section, after some preliminary definitions, we explain how to obtain a hypertree decomposition of an acyclic join, and then explain (a variation of) the Inside-Out algorithm from~\cite{faq} for evaluating a SumProd query   $Q(J) =  \oplus_{x \in J} \otimes_{i=1}^d F_{i}(x_i)$ over a commutative semiring $(R,\oplus, \otimes, I_0, I_1)$ for acyclic join $J=T_1 \Join \dots \Join T_m$. 
A call of Inside-Out may optionally include a root table $T_r$. 

Let $C_i$ denote the set of features in $T_{i}$ and let $C = \bigcup_i C_i$. Furthermore, given a set of features $C_i$ and a tuple $x$, let $\Pi_{C_i}(x)$ be the projection of $x$ onto $C_i$.

\begin{definition}
\label{definition:acyclicity}
A join query $J=T_1 \Join \dots \Join T_m$ is acyclic if there exists a tree $G=(V,E)$, called the hypertree decomposition of $J$, such that:
\begin{itemize}
    \item 
The set of vertices are $V = \{v_1,\dots,v_m\}$, and \item
for every feature $c \in C$, the set of vertices $\{v_i| c \in C_i\}$ is a connected component of $G$.
\end{itemize}
\end{definition}

\paragraph{Algorithm to Compute Hypertree Decomposition:}
\begin{enumerate}
    \item Initialize graph $G=(V,\emptyset)$ where $V = \{v_1, \ldots, v_m \} $.
    \item Repeat the following steps until $|T| = 1$:
    \begin{enumerate}
        \item Find $T_i$ and $T_j$ in $T$ such that every feature of $T_i$ is either not in any other table of $T$ or is in $T_j$. If there exists no $T_i$ and $T_j$ with this property, then the query is cyclic.
        \item Remove $T_i$ from $T$ and add the edge $(v_i,v_j)$ to $G$.
    \end{enumerate}
\end{enumerate}

\paragraph{Inside-Out Algorithm:}
\begin{enumerate}
\item Compute the hypertree decomposition $G=(V, E)$ of $J$. 
    \item 
    Assign each feature $c$ in $J$ to an arbitrary table $T_i$ such that $c\in C_i$. Let $A_i$ denote the features assigned to $T_i$ in this step.
    \item For each table $T_i$, add a new column/feature $Q_i$. For all the tuples $x \in T_i$, initialize the value of column $Q_i$ in row $x$ to $Q_i^x =  \bigotimes_{j \in A_i} F_{j}(x_j)$. Note that if $A_i = \emptyset$ then $Q_i^x = I_1$.
    \item Repeat until $G$ has only one vertex \label{dijkstra}
    \begin{enumerate}
        \item Pick an arbitrary edge $(v_i, v_j)$ in  $G$ such that $v_i$ is a leaf and $i \neq r$.
        \item Let $C_{ij} = C_{i} \cap C_{j}$ be the shared features between $T_i$ and $T_j$.
        \item Construct a temporary table $T_{ij}$ that has the features $C_{ij} \cup \{Q_{ij}\}$. 
         \item \label{step:aggregation} If $C_{ij} = \emptyset$, then table $T_{ij}$ only has the column/feature $Q_{ij}$ and one row, and its lone entry is set to $\oplus_{x\in T_i} Q_i^x$. Otherwise, iterate through the $y$ such that there exists an $x \in T_i$ for which $\Pi_{C_i}(x)=y$, and add the row $(y,Q_{ij}^y)$ to table $T_{ij}$ where:
         \begin{enumerate}
             \item $Q_{ij}^y$ is set to the sum, over all rows $x \in T_i$ such that $C_{ij}(x)=y$, of   $Q_i^x$. \label{binarysumstep}
\end{enumerate}
        \item \label{step:multiplication} For all the tuples $(x,Q_j^x) \in T_j$, let $y = \Pi_{C_{ij}}(x)$, and update $Q_j^x$ by
        \begin{align*}
            Q_j^x \gets Q_j^x \otimes Q_{ij}^y.
        \end{align*}
        If $(y,Q_{ij}^y) \notin T_{ij}$, set $Q_j^x = I_0$. 
        \item Remove vertex $v_i$ and edge $(v_i, v_j)$ from $G$.
    \end{enumerate}
    \item \label{step:roottable} At the end, when there is one vertex $v_r$ left in $G$, return the value 
    \begin{align*}
        \bigoplus_{(x,Q_r^x) \in T_r} Q_r^x
    \end{align*}
\end{enumerate}

When we use the Inside-Out algorithm in context of an approximation algorithm is important that the sum computed in step \ref{binarysumstep} is computed using a balanced binary tree, so that if $k$ items are being summed, the depth of the expression tree is at most $\lceil \log k \rceil$.

One way to think about step \ref{dijkstra} of the algorithm is that it is  updating the $Q_j$ values to what they would
be if, what good old CLRS~\cite{Cormen2001} calls a relaxation in the description of the Bellman-Ford shortest path algorithm, was applied to every edge in a particular bipartite graph $G_{i,j}$.
 In $G_{i,j}$ one side of the vertices are the rows in $T_i$,  and the other side are the rows in $T_j$,
and there a directed edge $(x, y)$  from a vertex/row in $T_i$ to a vertex/row in $T_j$ if they have
equal projections onto $C_{i,j}$.
The length $P_j^y$ of edge $(x, y)$ is the original
value of $Q_j^y$ before the execution of step \ref{dijkstra}.
%Let $P_j^y$ be the value of $Q_j^y$ prior to the relaxation step, then 
A relaxation step on a directed edge $(x, y)$
is then $Q_j^y = (P_j^y \otimes Q_i^x) \oplus Q_j^y$. So the result of step \ref{dijkstra} of
Inside-Out is the same as relaxing every edge in $G_{i,j}$. Although  Inside-Out doesn't explicitly relax every edge. Inside-Out exploits the 
structure of $G_{i,j}$, by grouping together rows
in $T_i$ that have the same projection onto $C_{i,j}$, to be more efficient.

\section{Algorithm for Inequality Row Counting}
\label{section:rowcount}

The {\em Inequality Row Counting} is a special case of SumProd FAQ-AI(1) in which the SumProd query $Q({\mathcal L}(J)) = \sum_{x \in J} \prod_{i = 1}^d 1$ counts the number of rows in the design matrix that satisfy the constraints $\mathcal{L}$, which consists of one additive constraint $\sum_i g_i(x_i) \leq L$,
over a join $J = T_1 \Join T_2 \Join \dots \Join T_m$. In subsection \ref{section:rowcount:exact}, we
design a SumProd query over a dynamic programming semiring that computes 
$Q({\mathcal L}(J))$ exactly in exponential time. 
 Then in subsection \ref{subsect:rowcountsketch} we explain how to apply standard sketching techniques to obtain a RAS. 
We finish in subsection \ref{subsect:generalupper:sumsum} by explaining how to use our algorithm for Inequality Row Counting to obtain to obtain a RAS SumSum queries covered by Theorem \ref{thm:sumsummain}. 

\subsection{An Exact Algorithm}
\label{section:rowcount:exact}

We first define a commutative semiring $(S, \sqcup, \sqcap, E_0, E_1)$ as follows:
\begin{itemize}
    \item 
The elements of the base  set $S$ are finite multi-sets of real numbers. Let $\# A (e)$ denote the frequency of the real value $e$ in the multi-set $A$ and let it be $0$ if $e$ is not in $A$. Thus one can also think of $A$ as a set of pairs of the form $(e, \# A (e))$. 

\item
The additive identity $E_0$ is the empty set $ \emptyset$.
\item
The addition operator
$\sqcup$ is the union of the two multi-sets; that is $A=B \sqcup C$ if and only if for all real values $e$, $\# A (e) = \# B(e) + \#C(e)$. 
\item
The multiplicative identity is $E_1 = \{0\}$.
\item
The multiplication operator $\sqcap$ contains the  pairwise sums from the two input multi-sets; that is, $A = B \sqcap C$ if and only if for all real values $e$, $\# A (e) = \sum_{i \in \mathbb{R}}(\# B(e-i) \cdot \# C(i))$. Note that this summation is well-defined because there is a finite number of values for $i$ such that $B(e-i)$ and $C(i)$ are non-zero.
\end{itemize}

\begin{lemma}
\label{lemma:rowcounting:semiring}
$(S, \sqcup, \sqcap, E_0, E_1)$ is a commutative semiring. 
\end{lemma}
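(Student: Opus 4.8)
The plan is to identify each multi-set $A \in S$ with its frequency function $f_A \colon \mathbb{R} \to \mathbb{Z}_{\ge 0}$ defined by $f_A(e) = \# A(e)$, which by definition has finite support. Under this identification $\sqcup$ becomes pointwise addition of functions, $E_0$ becomes the zero function, $\sqcap$ becomes the convolution $(f_A * f_B)(e) = \sum_i f_A(e - i)\, f_B(i)$ (the sum effectively ranging over the finitely many $i$ with $f_B(i) \ne 0$), and $E_1$ becomes the function that is $1$ at $0$ and $0$ elsewhere. In other words $S$ is the monoid semiring $\mathbb{N}[(\mathbb{R}, +)]$, and all the semiring axioms are standard; the proof mostly just transcribes their verification.

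First I would check that $(S, \sqcup, E_0)$ is a commutative monoid: since $(\mathbb{Z}_{\ge 0}, +, 0)$ is a commutative monoid and $\sqcup$ acts coordinatewise on frequency functions, associativity, commutativity and the identity law are immediate, and finite support is preserved because $\mathrm{supp}(f_A + f_B) \subseteq \mathrm{supp}(f_A) \cup \mathrm{supp}(f_B)$. Next I would check that $(S, \sqcap, E_1)$ is a commutative monoid. Well-definedness: if $(f_A * f_B)(e) \ne 0$ then $e = i + j$ for some $i \in \mathrm{supp}(f_A)$ and $j \in \mathrm{supp}(f_B)$, so $\mathrm{supp}(f_A * f_B)$ is contained in the finite sumset $\mathrm{supp}(f_A) + \mathrm{supp}(f_B)$. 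Commutativity follows by reindexing $i \mapsto e - i$. The identity law $f_A * f_{E_1} = f_A$ is immediate since $f_{E_1}$ is supported at $0$. For associativity, both $((f_A * f_B) * f_C)(e)$ and $(f_A * (f_B * f_C))(e)$ expand to $\sum_{j, k} f_A(e - j - k)\, f_B(j)\, f_C(k)$, which has only finitely many nonzero terms (those with $j \in \mathrm{supp}(f_B)$, $k \in \mathrm{supp}(f_C)$) and may therefore be rearranged freely.

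Finally I would check the two laws linking $\sqcup$ and $\sqcap$. Distributivity: $(f_A * (f_B + f_C))(e) = \sum_i f_A(e - i)\,(f_B(i) + f_C(i)) = (f_A * f_B)(e) + (f_A * f_C)(e)$ by splitting a finite sum, and the other-sided distributive law then follows from commutativity of $\sqcap$. Annihilation: $(f_{E_0} * f_A)(e) = \sum_i 0 \cdot f_A(i) = 0$, so $E_0 \sqcap A = E_0$ for every $A$. I expect the only step needing any care is associativity of $\sqcap$, where one must first invoke the finite-support observation to know the double sum is genuinely finite before rearranging it; everything else is routine bookkeeping on nonnegative integer coefficients.
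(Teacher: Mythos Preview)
Your proof is correct and follows essentially the same approach as the paper: both arguments identify a multiset with its frequency function $e \mapsto \#A(e)$ and verify each semiring axiom by direct calculation on these functions (pointwise addition for $\sqcup$, convolution for $\sqcap$, reindexing for commutativity, a change of variables for associativity, and so on). Your additional remark that this structure is the monoid semiring $\mathbb{N}[(\mathbb{R},+)]$ is a nice conceptual observation the paper does not make explicit, but the substance of the verification is the same.
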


\begin{proof}
    To prove the lemma, we prove the following claims in the order in which they appear.
    \begin{enumerate}
        \item $A \sqcup B = B \sqcup A$
        \item $A \sqcup (B \sqcup C) =( A \sqcup B) \sqcup C$
        \item $A \sqcup E_0 = A$
        \item $A \sqcap B = B \sqcap A$
        \item $A \sqcap (B \sqcap C) =( A \sqcap B) \sqcap C$
        \item $A \sqcap E_0 = E_0$
        \item $A \sqcap E_1 = A$
        \item $A \sqcap (B \sqcup C) = (A \sqcap B) \sqcup (A \sqcap C)$
    \end{enumerate}
    
    First we show $\sqcup$ is commutative and associative and $A \sqcup E_0 = A$.
    By definition of $\sqcup$, $C=A \sqcup B$ if and only if for all $e \in C$ we have $\# C(e) = \# A(e) + \# B(e)$. Since summation is commutative and associative, $\sqcup$ would be commutative and associative as well. Also note that if $B = E_0 = \emptyset$ then $\# B(e) = 0$ for all values of $e$ and as a result $\# C(e) = \# A(e)$ which means $C=A$.
    
    Now we can show that $\sqcap$ is commutative and associative. By definition of $\sqcap$, $C = A \sqcap B$ if and only if for all values of $e$, $\# C(e) = \sum_{i\in \mathbb{R}}(\# A(e-i) \cdot \# B(i))$, since we are taking the summation over all values:
    \begin{align*}
        \# C(e) &= \sum_{i \in \mathbb{R}}(\# A(e-i) \cdot \# B(i))
        \\ &= \sum_{i \in \mathbb{R}}(\# A(i) \cdot \# B(e-i))
    \end{align*}
    The last line is due to the definition of $B \sqcap A$, which means $\sqcap$ is commutative. 
    
    To show claim (5), let $D = A\sqcap(B\sqcap C)$ and $D' = (A \sqcap B) \sqcap C$:
    \begin{align*}
        \# D(e) &= \sum_{i \in \mathbb{R}} \# A(e-i) \cdot \Big (\sum_{j \in \mathbb{R}} \# B(i-j) \cdot \# C(j)\Big )
        \\
        &= \sum_{i,j \in \mathbb{R}} \# A(e-i) \cdot \# B(i-j) \cdot \# C(j)
    \end{align*}
    By setting $i' = e - j$ and $j' = e- i$, we obtain:
    \begin{align*}
        \# D(e)
        &= \sum_{i',j' \in \mathbb{R}} \# A(j') \cdot \# B(i'-j') \cdot \# C(e-i')
        \\
        &= \sum_{i' \in \mathbb{R}} \Big(\sum_{j' \in \mathbb{R}}\# A(j') \cdot \# B(i'-j') \Big) \cdot \# C(e-i') = \# D'(e),
    \end{align*}
    which means $\sqcap$ is associative, as desired.
    
    Now we prove $A\sqcap E_0 = E_0$ and $A \sqcap E_1 = A$. The claim (6) is easy to show since for all $e$, $\# E_0(e) = 0$ then for all real values $e$ $\sum_{i\in \text{dist}(A)}(\# A(i) \cdot \# E_0(e-i)) = 0$; therefore, $A\sqcap E_0 = E_0$. For claim (7), we have $\sum_{i\in \text{dist}(E_1)}(\# E_1(i) \cdot \# A(e-i)) = (\# E_1(0) \cdot \# A(e)) = \# A(e)$; therefore, $A\sqcap E_1 = A$.
    
    At the end all we need to show is the distributive law which means we need to show $A\sqcap(B \sqcup C) = (A\sqcap B) \sqcup (A \sqcap C)$. Let $D =A\sqcap(B \sqcup C)$ and $D'=(A\sqcap B) \sqcup (A \sqcap C)$. We have,
    \begin{align*}
        \# D(e) &= \sum_{i \in \mathbb{R}} \# A(e-i) \cdot (\# B(i) + \# C(i))
        \\ &= \sum_{i \in \mathbb{R}} (\# A(e-i) \cdot \# B(i)) + (\# A(e-i) \cdot \# C(i))
        \\ &= \sum_{i \in \mathbb{R}} (\# A(e-i) \cdot \# B(i)) + \sum_{j \in \mathbb{R}}(\# A(e-j) \cdot \# C(j)) = \# D'(e)
    \end{align*}
\end{proof}

\begin{lemma}
\label{lemma:multiseturn}
    The SumProd query $\widehat Q(J) = \sqcup_{x \in J} \sqcap_{i = 1}^d F_i(x_i)$, where   $F_i(x_i) = \{ g_i(x_i) \}$, evaluates to the  multiset  $\{\sum_i g_i(x_i) \mid x \in J\}$ the aggregate of the $g_i$ functions over the rows of $J$. 
\end{lemma}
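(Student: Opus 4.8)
The plan is to prove Lemma~\ref{lemma:multiseturn} by unwinding the definitions of $\sqcup$ and $\sqcap$ and showing, by induction on $d$, that the inner product $\sqcap_{i=1}^d F_i(x_i)$ evaluates to the singleton multiset $\{\sum_{i=1}^d g_i(x_i)\}$ for a fixed row $x$, and then showing that the outer $\sqcup$ over $x \in J$ accumulates exactly these singletons with the correct multiplicities. Recall from Lemma~\ref{lemma:rowcounting:semiring} that $(S,\sqcup,\sqcap,E_0,E_1)$ is a commutative semiring, so the SumProd expression $\widehat Q(J) = \sqcup_{x \in J} \sqcap_{i=1}^d F_i(x_i)$ is well-defined regardless of the order of summation or association.

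First I would handle the inner product. Fix a row $x \in J$. I claim $\sqcap_{i=1}^d \{g_i(x_i)\} = \{\sum_{i=1}^d g_i(x_i)\}$, the singleton multiset whose unique element $\sum_{i=1}^d g_i(x_i)$ has frequency $1$. The base case $d=1$ is immediate. For the inductive step, suppose $B = \sqcap_{i=1}^{d-1}\{g_i(x_i)\}$ equals the singleton $\{s\}$ with $s = \sum_{i=1}^{d-1} g_i(x_i)$, and let $C = \{g_d(x_d)\}$. By the definition of $\sqcap$, for every real $e$ we have $\#(B \sqcap C)(e) = \sum_{i \in \mathbb{R}} \#B(e-i)\cdot \#C(i)$; since $\#C(i)$ is nonzero only at $i = g_d(x_d)$ (where it is $1$), this collapses to $\#B(e - g_d(x_d))$, which is $1$ exactly when $e - g_d(x_d) = s$, i.e. $e = s + g_d(x_d) = \sum_{i=1}^d g_i(x_i)$, and $0$ otherwise. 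Hence $B \sqcap C = \{\sum_{i=1}^d g_i(x_i)\}$, completing the induction.

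Next I would handle the outer sum. By definition $A = \sqcup_{x \in J}\big(\sqcap_{i=1}^d F_i(x_i)\big)$ satisfies $\#A(e) = \sum_{x \in J} \#\big(\sqcap_{i=1}^d F_i(x_i)\big)(e)$, and by the previous paragraph the $x$-th summand is $1$ if $\sum_i g_i(x_i) = e$ and $0$ otherwise. Therefore $\#A(e)$ counts precisely the number of rows $x \in J$ with $\sum_i g_i(x_i) = e$, which is exactly the multiplicity of $e$ in the multiset $\{\sum_i g_i(x_i) \mid x \in J\}$. Thus $\widehat Q(J)$ equals that multiset, as claimed.

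I do not expect a serious obstacle here; the only mild subtlety is bookkeeping the multiplicities correctly — in particular being careful that $\sqcup$ adds frequencies (so distinct rows $x \neq x'$ with equal aggregate value $\sum_i g_i(x_i)$ contribute $2$ to that value's frequency, matching the multiset semantics) and that the $\sqcap$-convolution in the inductive step is over all of $\mathbb{R}$ but has finite support since each $F_i(x_i)$ is a singleton. Everything else is a direct unfolding of the semiring operations defined just above the lemma.
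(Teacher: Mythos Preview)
Your proposal is correct and follows essentially the same two-step structure as the paper's proof: first reduce the inner $\sqcap$-product to the singleton $\{\sum_i g_i(x_i)\}$, then observe that the outer $\sqcup$ accumulates these singletons into the desired multiset. The paper simply asserts both steps from the definitions, whereas you spell out the induction on $d$ and the frequency count explicitly; the added rigor is fine but not strictly necessary.
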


\begin{proof}
    Based on the definition of $\sqcap$ we have,
    \begin{align*}
        \sqcap_{i=1}^d F_i(x_i) &= \sqcap_{i=1}^d \{g_i(x_i)\}
       = \left\{ \sum_{i=1}^d g_i(x_i)\right\}
    \end{align*}
    Then we can conclude:
    \begin{align*}
        \sqcup_{x \in J} \sqcap_{i = 1}^d F_i(x_i) &= \sqcup_{x \in J} \left\{\sum_{i=1}^d g_i(x_i) \right\} 
       = \left\{ \sum_{i=1}^d g_i(x_i) \mid x \in J\right\}
    \end{align*}
\end{proof}

Thus the inequality row count is the number of elements in the multiset returned by $\widehat Q(J)$ that are at most $L$.

\subsection{Applying Sketching}
\label{subsect:rowcountsketch}

    For a multiset $A$, let  $\triangle A(t)$ denote the number of elements in $A$ that are less than or equal to $t$. 
Then the $\epsilon$-sketch $\mathbb{S}_\epsilon(A)$ of a multiset $A$
is a multiset formed in the following manner: For each integer $k \in [1, \lfloor \log_{1+\epsilon} |A| \rfloor ]$ there are $\lfloor (1+\epsilon)^k \rfloor - \lfloor (1+\epsilon)^{k-1} \rfloor$ copies of the
$\lfloor (1+\epsilon)^k \rfloor$ smallest element $x_k \in A$; that is, 
$x_k = \triangle A(\lfloor (1+\epsilon)^k \rfloor)$.
Note that $|\mathbb{S}_\epsilon(A)|$ may be less
that $|A|$ as the maximum value of $k$ is $\lfloor \log_{1+\epsilon} |A| \rfloor $.
We will show in Lemma 
\ref{lemma:approx:guarantee:single:sketch}
that sketching preserves  $\triangle A(t)$ 
within $(1 + \epsilon)$ factor.

\begin{lemma}
\label{lemma:approx:guarantee:single:sketch}
    For all $t \in \mathbb{R}$, we have $(1-\epsilon) \triangle A(t) \leq \triangle \mathbb{S}_\epsilon(A)(t) \leq \triangle A(t)$.
\end{lemma}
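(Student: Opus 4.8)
The plan is to fix the threshold $t \in \mathbb{R}$ and reduce both inequalities to a purely combinatorial statement about the geometric grid of ranks used to build the sketch. Write $A$ in sorted order $a_1 \le a_2 \le \cdots \le a_n$, $n = |A|$, and for $0 \le k \le K := \lfloor \log_{1+\epsilon} n \rfloor$ put $r_k := \lfloor (1+\epsilon)^k \rfloor$ (so $r_0 = 1 \le r_1 \le \cdots \le r_K \le n$) and $x_k := a_{r_k}$, the $r_k$-th smallest element of $A$. By construction $\mathbb{S}_\epsilon(A)$ consists, for each $k \in [1,K]$, of exactly $r_k - r_{k-1}$ copies of $x_k$, and $x_0 \le x_1 \le \cdots \le x_K$. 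Put $m := \triangle A(t)$; if $m = 0$ then no $a_j$, hence no $x_k$, is $\le t$, so $\triangle \mathbb{S}_\epsilon(A)(t) = 0$ and both inequalities are trivial, so I may assume $m \ge 1$.

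Since the sorted elements of $A$ that are $\le t$ are exactly $a_1,\dots,a_m$, a copy of $x_k$ is $\le t$ iff $r_k \le m$, and because the $r_k$ are nondecreasing the qualifying indices form an initial segment $\{1,\dots,k^\star\}$ where $k^\star := \max\{k \in [0,K] : r_k \le m\}$ (well defined since $r_0 = 1 \le m$). Telescoping, the number of sketch elements at or below $t$ is $\sum_{k=1}^{k^\star} (r_k - r_{k-1}) = r_{k^\star} - r_0 = r_{k^\star} - 1$; that is, $\triangle \mathbb{S}_\epsilon(A)(t) = r_{k^\star} - 1$. The upper bound is then immediate: $r_{k^\star} \le m$ gives $\triangle \mathbb{S}_\epsilon(A)(t) = r_{k^\star} - 1 < m = \triangle A(t)$.

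For the lower bound I would show that $r_{k^\star}$ is within (essentially) a factor $1+\epsilon$ of $m$, by a two-case analysis driven by the maximality of $k^\star$. If $k^\star < K$ then $r_{k^\star+1} > m$, so $\lfloor (1+\epsilon)^{k^\star+1}\rfloor \ge m+1$, hence $(1+\epsilon)^{k^\star} \ge (m+1)/(1+\epsilon)$. If $k^\star = K$ then the definition of $K$ gives $n < (1+\epsilon)^{K+1}$, and since $m \le n$ we get $(1+\epsilon)^{k^\star} = (1+\epsilon)^K > m/(1+\epsilon)$. In both cases $(1+\epsilon)^{k^\star} \ge m/(1+\epsilon) \ge (1-\epsilon) m$ (using $1/(1+\epsilon) \ge 1-\epsilon$), and taking floors and absorbing the $O(1)$ rounding slack coming from $r_{k^\star} = \lfloor (1+\epsilon)^{k^\star}\rfloor$ and the constant $r_0 = 1$ should yield $\triangle \mathbb{S}_\epsilon(A)(t) = r_{k^\star} - 1 \ge (1-\epsilon)m$.

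I expect this last step to be the only genuinely delicate point: keeping the additive losses from the floor functions — and from the boundary terms $k=0$ and $k=K$, and the very small values of $m$ — under control so that the clean bound $(1-\epsilon)m \le r_{k^\star}-1 \le m$ survives the rounding (this may be what forces the multiplicative $1-\epsilon$ rather than $1/(1+\epsilon)$). Everything else — the reduction to ranks, the equivalence $x_k \le t \iff r_k \le m$, the telescoping count, and the placement of $m$ within the grid $r_0 < r_1 < \cdots < r_K$ — is routine once the sketch is read as ``keep the rank-$r_k$ element of $A$ with multiplicity $r_k - r_{k-1}$.''
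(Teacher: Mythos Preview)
Your approach is essentially the same as the paper's: both view the sketch as rounding each element at rank $j$ up to the element at rank $\lfloor(1+\epsilon)^{i+1}\rfloor$ for the $i$ with $(1+\epsilon)^i < j \le (1+\epsilon)^{i+1}$, obtain the upper bound from this monotonicity, and obtain the lower bound by locating $m=\triangle A(t)$ between consecutive grid points and bounding the loss by $(1+\epsilon)^{i+1}-(1+\epsilon)^i=\epsilon(1+\epsilon)^i\le\epsilon m$. Your explicit telescoping to $r_{k^\star}-1$ is just a cleaner bookkeeping of what the paper does informally, and the boundary worry you flag (the constant $-1$ and very small $m$) is real but is equally glossed over in the paper's own argument.
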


\begin{proof}
    Let $A' = \mathbb{S}_{\epsilon}(A)$. Note that since we are always rounding the weights up, every item in $A$ that is larger than $t$ will be larger in $A'$ as well. Therefore, $\triangle A'(t) \leq \triangle A(t)$. We now show the lower bound. Recall that in the sketch, every item in the sorted array $A$ with an index in the interval $((1+\epsilon)^{i}, (1+\epsilon)^{i+1}]$ (or equivalently $((1+\epsilon)^{i}, \lfloor (1+\epsilon)^{i+1} \rfloor$) will be rounded to $A[\lfloor(1+\epsilon)^{i+1} \rfloor]$. Let $i$ be the integer such that $(1+\epsilon)^{i} < \triangle A(t) \leq (1+\epsilon)^{i+1}$, then the only items that are smaller or equal to $t$ in $A$ and are rounded to have a weight greater than $t$ in $A'$ are the ones with index between $(1+\epsilon)^{i}$ and $j = \triangle A(t)$. Therefore,
    \begin{align*}
        \triangle A(t) - \triangle A'(t) \leq& j-(1+\epsilon)^{i} \leq (1+\epsilon)^{i+1} - (1+\epsilon)^{i} 
        \\=& \epsilon (1+\epsilon)^{i} \leq \epsilon \triangle A(t),
    \end{align*}
    which shows the lower bound of $\# \triangle A(t)$ as claimed.
\end{proof}

Then our algorithm runs the Inside-Out algorithm, with the operation $\sqcup$ replaced by an operation $\bigcirc$, defined by
$A \bigcirc B = \mathbb{S}_\alpha( A \sqcup B)$, and with the operation $\sqcap$ replaced by an operation $\odot$, defined by $A \odot B = \mathbb{S}_\alpha(A \sqcap B)$, where $\alpha = \Theta(\frac{\epsilon}{m^2 \log(n)})$. That is, the operations $\bigcirc$ and $\odot$ are the sketched versions of $\sqcup$ and $\sqcap$.   That is, Inside-Out is run on the query  $\tilde Q(J) = \bigcirc_{x \in J} \odot_{i = 1}^d F_i(x_i)$, where   $F_i(x_i) = \{ g_i(x_i) \}$.

Because 
 $\bigcirc$ and $\odot$ do not necessarily form a semiring, Inside-Outside may not  return
$\tilde Q(J)$.
However, Lemma \ref{lemma:epsilonsketch} bounds the error introduced by each application of $\bigcirc$ and $\odot$. 
This makes it possible in
Theorem~\ref{thm:counting} to bound the error of Inside-Out's output.

\begin{lemma}
\label{lemma:epsilonsketch}
    Let $A'= \mathbb{S}_{\beta}(A)$,  $B'= \mathbb{S}_\gamma(B)$,  $C=A \sqcup B$, $C'=A' \sqcup B'$, $D=A \sqcap B$, and $D'=A' \sqcap B'$. For all $t \in \mathbb{R}$, we have:
    \begin{enumerate}
        \item $(1-\max(\beta,\gamma)) \triangle C(t) \leq \triangle C'(t) \leq \triangle C(t)$
        \item $(1-\beta - \gamma) \triangle D(t) \leq \triangle D'(t) \leq \triangle D(t)$
    \end{enumerate}
\end{lemma}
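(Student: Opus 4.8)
The plan is to reduce everything to the single--sketch guarantee of Lemma~\ref{lemma:approx:guarantee:single:sketch}, which gives $(1-\epsilon)\triangle X(t) \le \triangle \mathbb{S}_\epsilon(X)(t) \le \triangle X(t)$ for \emph{every} real $t$, applied once to $A' = \mathbb{S}_\beta(A)$ and once to $B' = \mathbb{S}_\gamma(B)$. The only extra ingredients are two elementary identities for cumulative counts: from $\#(X \sqcup Y)(e) = \#X(e) + \#Y(e)$ one gets $\triangle(X \sqcup Y)(t) = \triangle X(t) + \triangle Y(t)$, and from the definition of $\sqcap$, after swapping the order of the (finite) summations and reindexing $e' = e-i$, one gets $\triangle(X \sqcap Y)(t) = \sum_{i \in \mathbb{R}} \#Y(i)\,\triangle X(t-i)$, an identity symmetric in $X$ and $Y$ by commutativity of $\sqcap$.

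For claim~(1), apply the union identity to $C = A \sqcup B$ and $C' = A' \sqcup B'$ to get $\triangle C'(t) = \triangle A'(t) + \triangle B'(t)$. The upper bound $\triangle C'(t) \le \triangle C(t)$ is immediate from $\triangle A'(t) \le \triangle A(t)$ and $\triangle B'(t) \le \triangle B(t)$. For the lower bound, $\triangle A'(t) \ge (1-\beta)\triangle A(t)$ and $\triangle B'(t) \ge (1-\gamma)\triangle B(t)$; since $\triangle A(t), \triangle B(t) \ge 0$, the sum is at least $(1-\max(\beta,\gamma))(\triangle A(t) + \triangle B(t)) = (1-\max(\beta,\gamma))\triangle C(t)$.

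For claim~(2), I would sketch one factor at a time via the intermediate multiset $D'' = A' \sqcap B$. Using the product identity with $X = A'$, $Y = B$ gives $\triangle D''(t) = \sum_i \#B(i)\,\triangle A'(t-i)$; comparing this termwise with $\triangle D(t) = \sum_i \#B(i)\,\triangle A(t-i)$ and invoking Lemma~\ref{lemma:approx:guarantee:single:sketch} at each shifted point $t-i$ yields $(1-\beta)\triangle D(t) \le \triangle D''(t) \le \triangle D(t)$. Next, writing $D' = A' \sqcap B' = B' \sqcap A'$ and $D'' = B \sqcap A'$ and applying the product identity with the roles swapped, $\triangle D'(t) = \sum_i \#A'(i)\,\triangle B'(t-i)$ and $\triangle D''(t) = \sum_i \#A'(i)\,\triangle B(t-i)$; the single--sketch bounds for $B'$ at the points $t-i$ give $(1-\gamma)\triangle D''(t) \le \triangle D'(t) \le \triangle D''(t)$. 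Chaining the two pairs of inequalities, $\triangle D'(t) \le \triangle D(t)$ and $\triangle D'(t) \ge (1-\gamma)(1-\beta)\triangle D(t) \ge (1-\beta-\gamma)\triangle D(t)$, the last step because $(1-\beta)(1-\gamma) = 1-\beta-\gamma+\beta\gamma \ge 1-\beta-\gamma$ and $\triangle D(t) \ge 0$.

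The one point worth stating carefully --- and the only place the argument is not purely mechanical --- is that the single--sketch guarantee is invoked at the shifted arguments $t-i$ rather than at $t$ itself; this is legitimate because Lemma~\ref{lemma:approx:guarantee:single:sketch} holds for all real $t$, and each sum has only finitely many nonzero terms. Beyond that the proof is a routine termwise comparison, so I do not anticipate a genuine obstacle.
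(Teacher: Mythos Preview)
Your proof is correct and follows essentially the same approach as the paper: both reduce claim~(1) to the additive identity $\triangle(X\sqcup Y)=\triangle X+\triangle Y$ and claim~(2) to a two-step argument via an intermediate product (the paper uses $A\sqcap B'$ while you use $A'\sqcap B$, which is immaterial by symmetry), comparing termwise against the convolution identity $\triangle(X\sqcap Y)(t)=\sum_i \#Y(i)\,\triangle X(t-i)$ and invoking Lemma~\ref{lemma:approx:guarantee:single:sketch} at the shifted points. Your write-up is in fact slightly more careful, making explicit both the inequality $(1-\beta)(1-\gamma)\ge 1-\beta-\gamma$ and the fact that the single-sketch bound is needed at all real arguments $t-i$.
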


\begin{proof}
    By the definition of $\sqcup$, we know $\# C'(t) = \# A'(t) + \# B'(t)$; thus we have:
    \begin{align*}
        \triangle C'(t) &=\sum_{\tau \leq t} \# C'(\tau) = \sum_{\tau\leq t} \# A'(\tau) + \sum_{\tau\leq t} \# B'(\tau)
        \\
        &= \triangle A'(t) + \triangle B'(t)
    \end{align*}
    Similarly, we have $\triangle C(t) = \triangle A(t) + \triangle B(t)$. Then by Lemma \ref{lemma:approx:guarantee:single:sketch} we immediately have the first claim.
    
    Let $D'' = A \sqcap B'$, Based on the definition of $\sqcap$ we have:
    \begin{align*}
        \triangle D''(t) &= \sum_{\tau \leq t} \# D''(t) = \sum_{\tau \leq t} \sum_{v \in \mathbb{R}}(\# A(v) \cdot \# B'(\tau-v))
        \\
        &=  \sum_{v \in \mathbb{R}}\sum_{\tau \leq t}(\# A(v) \cdot \# B'(\tau-v))
        =\sum_{v \in \mathbb{R}}(\# A(v) \cdot \triangle B'(t-v))
    \end{align*}
    Therefore, using Lemma \ref{lemma:approx:guarantee:single:sketch} we have:
    $(1-\gamma) \triangle D(t) \leq \triangle D''(t) \leq (1+\gamma) \triangle D(t)$.
    We can similarly replace $A$ to $A'$ in $D''$ and get $(1-\beta) \triangle D''(t) \leq \triangle D'(t) \leq (1+\beta) \triangle D''(t)$
which proves the second claim. 
\end{proof}

\begin{theorem}
    \label{thm:counting}
Our algorithm achieves an $(1+\epsilon)$-approximation to the Row Count Inequality query $Q({\mathcal L}(J))$
in time
$O(\frac{1}{\epsilon^2}(m^3 \log^2(n))^2  (d^2 m n^{h} \log(n)) )$. 
\end{theorem}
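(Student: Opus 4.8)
The plan is to combine the exact dynamic-programming semiring of Section~\ref{section:rowcount:exact} with the sketching error bounds of Lemma~\ref{lemma:epsilonsketch}, and then track how errors accumulate through an execution of Inside-Out. First I would recall from Lemma~\ref{lemma:multiseturn} that the exact SumProd query $\widehat Q(J)$ over the semiring $(S,\sqcup,\sqcap,E_0,E_1)$ produces the multiset $\{\sum_i g_i(x_i)\mid x\in J\}$, so the true answer $Q({\mathcal L}(J))$ equals $\triangle \widehat Q(J)(L)$. The algorithm instead runs Inside-Out on $\tilde Q(J)=\bigcirc_{x\in J}\odot_{i=1}^d F_i(x_i)$ with the sketched operators $\bigcirc$ and $\odot$ at parameter $\alpha=\Theta(\epsilon/(m^2\log n))$. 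The goal is to show the returned multiset $\tilde A$ satisfies $(1-\epsilon)\triangle\widehat Q(J)(L)\le\triangle\tilde A(L)\le\triangle\widehat Q(J)(L)$ (after a trivial rescaling of $\epsilon$), and to bound the running time.

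The key step is the error analysis. I would argue inductively over the expression tree that Inside-Out evaluates. The important structural fact, emphasized right after the Inside-Out description, is that each binary sum in step~\ref{binarysumstep} is computed with a balanced binary tree of depth at most $\lceil\log n\rceil$, and there are at most $m$ tables, so every element of the design matrix passes through at most $O(m\log n)$ applications of $\bigcirc$ and at most $O(m\log n)$ applications of $\odot$ on its path to the root (more carefully: $m$ multiplication steps, each table contributing one product of $d$ factors which is itself a balanced tree of depth $\log d$, plus $m$ aggregation steps each of depth $\log n$). Using Lemma~\ref{lemma:epsilonsketch}, each $\bigcirc$ introduces no new multiplicative error beyond the max of its inputs' errors (part~1), while each $\odot$ adds the two inputs' errors (part~2). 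Since the aggregation operator $\sqcup$ does not compound error but the multiplication operator $\sqcap$ does, a node at depth $\ell$ in the combined expression tree accumulates relative error at most $\ell\cdot\alpha$ in the worst case; with total depth $O(m^2\log n)$ (accounting for the $m$ nested aggregations, each feeding into a product) this gives total relative error $O(m^2\log n)\cdot\alpha = O(\epsilon)$ when $\alpha=\Theta(\epsilon/(m^2\log n))$. Formally this is done by defining, for each intermediate multiset $B$ computed by the algorithm and its exact counterpart $\hat B$, the claim $(1-\delta_B)\triangle\hat B(t)\le\triangle B(t)\le\triangle\hat B(t)$ for all $t$, with $\delta_B$ bounded by $\alpha$ times the depth, and pushing the induction through steps~\ref{step:aggregation} and~\ref{step:multiplication} via Lemma~\ref{lemma:epsilonsketch}.

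For the running time, I would observe that after each sketching operation the multiset has at most $O(\log_{1+\alpha} N)=O(\frac{1}{\alpha}\log N)=O(\frac{m^2\log^2 n}{\epsilon})$ distinct values (using $N\le n^m$ so $\log N\le m\log n$), hence size bounded the same way up to the multiplicities which only need to be stored as counts. Each $\sqcup$ costs linear in these sizes and each $\sqcap$ costs quadratic, i.e.\ $O((\frac{m^2\log^2 n}{\epsilon})^2)=O(\frac{m^4\log^4 n}{\epsilon^2})$; wait, I should be careful—the statement's bound is $O(\frac{1}{\epsilon^2}(m^3\log^2 n)^2)$ times the Inside-Out cost, so the per-operation overhead is $O(\frac{m^6\log^4 n}{\epsilon^2})$, which I would get by being slightly more generous: the sketch size is $O(\frac{1}{\alpha}\log N)$ with $\alpha=\Theta(\epsilon/(m^2\log n))$ giving $O(\frac{m^3\log^2 n}{\epsilon})$, and the $\sqcap$ convolution is quadratic in this, yielding $(\frac{m^3\log^2 n}{\epsilon})^2$. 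Multiplying by the $O(d^2 m n^h\log n)$ cost of the underlying Inside-Out schedule (each semiring operation now replaced by a sketched multiset operation) gives the claimed $O(\frac{1}{\epsilon^2}(m^3\log^2 n)^2(d^2 m n^h\log n))$ bound.

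The main obstacle I expect is making the inductive error bound fully rigorous, because $\bigcirc$ and $\odot$ do not form a semiring, so I cannot simply appeal to correctness of Inside-Out on a semiring; instead I must re-derive, alongside Inside-Out's recursion, that the quantity actually computed is a sketch-perturbed version of the exact multiset and control the perturbation. The delicate point is that Lemma~\ref{lemma:epsilonsketch} is stated for a single application of $\sqcup$ or $\sqcap$ to already-sketched inputs, and chaining it requires that the ``error parameter'' $\delta_B$ behaves additively under $\sqcap$ and like a maximum under $\sqcup$ — precisely what the two parts of the lemma give — but I need to confirm the bookkeeping closes (in particular that the depth bound $O(m^2\log n)$, not something worse, governs the worst-case path) and that $\alpha$ is chosen with a large enough constant that the final error is at most $\epsilon$ rather than some constant multiple of it. Everything else is routine substitution.
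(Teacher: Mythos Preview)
Your proposal is correct and follows essentially the same approach as the paper: identify $Q(\mathcal L(J))=\triangle\widehat Q(J)(L)$ via Lemma~\ref{lemma:multiseturn}, push Lemma~\ref{lemma:epsilonsketch} inductively through the Inside-Out recursion to obtain total error $O(m^2\log n)\cdot\alpha$, and bound the running time by observing that each sketched multiset has size $O(\frac{m\log n}{\alpha})=O(\frac{m^3\log^2 n}{\epsilon})$ so each $\odot$ costs the square of that. One small over-count: the initial products $\bigotimes_{j\in A_i}F_j(x_j)$ in step~3 of Inside-Out are $\sqcap$'s of singletons and hence exact, so the ``$\log d$'' depth you mention in your parenthetical never contributes error---but you correctly drop it from your final $O(m^2\log n)$ depth bound anyway.
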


\begin{proof}
We first consider the approximation ratio. 
Inside-Out on the query $\tilde Q(J)$ performs
the same semiring operations as does on the 
query 
$\widehat{Q}(J)$, but it additionally applies
the  $\alpha$-Sketching operation over each partial results, meaning the algorithm applies $\alpha$-Sketching after steps \ref{step:aggregation},\ref{step:multiplication}, and \ref{step:roottable}. Lets look at each iteration of applying steps \ref{step:aggregation} and \ref{step:multiplication}. Each value produced in the steps \ref{step:aggregation} and \ref{step:roottable} is the result of applying $\bigcirc$ over at most $n^m$ different values (for acyclic queries it is at most $n$). Using Lemma \ref{lemma:epsilonsketch} and the fact that the algorithm applies $\bigcirc$ first on each pair and then recursively on each pair of the results, the total accumulated error produced by each execution of steps \ref{step:aggregation} and \ref{step:roottable} is $m\log(n) \alpha$. Then, since the steps \ref{step:aggregation} and \ref{step:multiplication} will be applied once for each table, and \ref{step:multiplication} accumulates the errors produced for all the tables, the result of the query will be $(m^2\log(n)+m)\alpha$-Sketch of $\widehat{Q}(J)$.

We now turn to bounding the running time of our algorithm.
The time complexity of Inside-Out is $O(m d^2 n^h \log n)$ when the summation and product operators take a constant time~\cite{faq}. Since each multi-set in a partial result  has at most $m^n$ members in it, an $\alpha$-sketch of the partial results will have at most $O(\frac{m\log n}{\alpha})$ values, and we compute each of $\bigcirc$ and
$\odot$ in time $O\left(\left(\frac{m\log n}{\alpha}\right)^2\right)$. 
Therefore our algorithm runs in time $O(\frac{1}{\epsilon^2}(m^3 \log^2(n))^2  (d^2 m n^{h} \log(n)) )$.
\end{proof}

\subsection{SumSum FAQ-AI(1)}
\label{subsect:generalupper:sumsum}

In this subsection we prove Theorem  \ref{thm:sumsummain}, that there is a RAS for  SumSum FAQ-AI(1) queries covered by the theorem. Our algorithm for SumSum queries uses our algorithm  for Inequality Row Counting.  Consider the SumSum $Q({\mathcal L}(J)) = \oplus_{x \in {\mathcal L}(J)} \oplus_{i=1}^d F_{i}(x_i)$, where 
$\mathcal{L}$ consists of one additive constraint $\sum_i g_i(x_i) \leq L$,
over a join $J = T_1 \Join T_2 \Join \dots \Join T_m$.

\paragraph{SumSum Algorithm:} 
For each table $T_j$, we run our Inside-Out on the Inequality Row Counting query $ \tilde Q(J)$, 
with the root table being $T_j$, and let $\tilde T_j$
be the resulting table just before step \ref{step:roottable} of Inside-Out is executed.  From the resulting tables, one can compute,  for every feature $i \in [d]$ and for each possible  value of $x_i$ for $x \in J$,  a $(1+\epsilon)$-approximation  $U(x_i)$ to the number of rows in the design matrix that contain value $x_i$ in column $i$, by aggregating over the row counts in any table $\tilde T_j$ that contains feature $i$. Then one can evaluate $Q({\mathcal L}(J))$ by
\begin{align*}
\bigoplus_{i=1}^d \bigoplus_{x_i \in D(i)} \bigoplus_{j=1}^{U(x_i)} F_{i}(x_i) 
\end{align*}
where $D(i)$ is the domain of feature $i$. 

Note that $\oplus$ operator is assumed to be repeatable, meaning if we have a $(1 \pm \epsilon)$ approximation of $U(x_i)$ then the approximation error of $\bigoplus_{j=1}^{U(x_i)} F_{i}(x_i)$ is also $(1 \pm \epsilon)$. Therefore, our SumSum algorithm is a $(1+\epsilon)$-approximation algorithm because the only error introduced is by our Inequality Row Counting algorithm. The running time is
$O(\frac{1}{\epsilon^2}(m^3 \log^2(n))^2  (d^2 m^2 n^{h} \log(n)) )$ because we run $m$  Inequality Row Counting algorithm $m$ times.

\section{SumProd FAQ-AI(1)}
\label{section:generalupper}

In this section we prove Theorem  \ref{thm:sumprodmain}, that there is a RAS for  SumProd FAQ-AI(1) queries covered by the theorem. Our RAS for such queries 
generalizes our RAS for Inequality Row Counting. 
Consider the SumProd query
$    Q({\mathcal L}(J)) = \oplus_{x \in {\mathcal L}(J)} \otimes_{i=1}^d F_{i}(x_i).
$
where $\mathcal L$ consists of the single additive  constraint $\sum_{i=1}^d g_i(x_i) \le L$.
We again first give an exact algorithm that can viewed as a reduction to a SumProd query over a dynamic programming semiring, and then apply sketching. 

\subsection{An Exact Algorithm}
\label{section:sumprod:exact}

We first define a structured commutative semiring $(S, \sqcup, \sqcap, E_0, E_1)$
derived from the $(R, \oplus, \otimes, I_0, I_1)$ as follows:
\begin{itemize}
\item
The base set $S$ are finite subsets $A$ of $\mathbb{R} \times (R - \{I_0\})$ 
with the property that $(e, v) \in A$ and $(e, u) \in A$ implies $v=u$; so there is only one tuple in
$A$ of the form $(e, *)$. 
One can can interpret the value of $v$ in a tuple $(e, v) \in A$ as a (perhaps fractional) multiplicity of $e$.
\item
The additive identity  $E_0$ is the empty set $\emptyset$.
\item
The multiplicative identity  $E_1$ is $\{(0,I_1)\}$.
\item
For all  $e\in \mathbb{R}$, define $\# A(e)$ to be $ v$ if  $(e, v) \in A$ and $I_0$ otherwise.
\item
The addition operator $\sqcup$ is defined by $A \sqcup B = C$ if and only if for all  $e \in \mathbb{R}$, it is the case that  $\# C(e) = \# A(e) \oplus \# B(e)$.   
 \item 
 The multiplication operator $\sqcap$ is defined by 
 $A \sqcap B = C$ if and only if for all  $e \in \mathbb{R}$, it is the case that $\# C(e) = \bigoplus_{i \in \mathbb{R}} \# A(e-i) \otimes \# B(i)$.
\end{itemize}

\begin{lemma}
\label{lemma:generated:semiring}
If $(R, \oplus, \otimes, I_0, I_1)$, is a commutative semiring then  $(S, \sqcup, \sqcap, E_0, E_1)$ is a commutative semiring.
\end{lemma}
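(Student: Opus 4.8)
## Proof Proposal for Lemma \ref{lemma:generated:semiring}

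The plan is to verify the eight semiring axioms in essentially the same order as in the proof of Lemma \ref{lemma:rowcounting:semiring}, but now tracking the pair-valued multiplicities $\# A(e) \in R$ rather than integer frequencies, and using the axioms of the underlying semiring $(R, \oplus, \otimes, I_0, I_1)$ wherever the earlier proof used the axioms of $(\mathbb{N}, +, \cdot, 0, 1)$. The key structural observation is that every element $A \in S$ is uniquely determined by the function $e \mapsto \# A(e)$ from $\mathbb{R}$ to $R \cup \{I_0\}$ with finite support (support meaning $\# A(e) \neq I_0$); so it suffices to check each identity at the level of these multiplicity functions, pointwise in $e$.

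First I would handle the additive structure: commutativity and associativity of $\sqcup$, and $A \sqcup E_0 = A$, all follow immediately from the corresponding facts for $\oplus$ on $R$ applied pointwise, since $\# (A \sqcup B)(e) = \# A(e) \oplus \# B(e)$ and $\# E_0(e) = I_0$ for all $e$ (one must note that $I_0$ is the additive identity of $R$, and that the support stays finite). Next, for $\sqcap$: commutativity follows from the reindexing $i \mapsto e - i$ together with commutativity of $\oplus$ and of $\otimes$ in $R$, exactly mirroring the calculation in Lemma \ref{lemma:rowcounting:semiring}; the identity $A \sqcap E_0 = E_0$ uses $I_0 \otimes v = I_0$ (the annihilation property of $I_0$ in a semiring) inside the $\bigoplus_i$; and $A \sqcap E_1 = A$ uses that $\# E_1$ is supported only at $0$ with value $I_1$, so the sum $\bigoplus_i \# A(e-i) \otimes \# E_1(i)$ collapses to $\# A(e) \otimes I_1 = \# A(e)$.

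The two substantive axioms are associativity of $\sqcap$ and the distributive law. For associativity, I would write out $\# (A \sqcap (B \sqcap C))(e) = \bigoplus_{i} \# A(e-i) \otimes \bigl(\bigoplus_j \# B(i-j) \otimes \# C(j)\bigr)$, push the outer $\otimes \# A(e-i)$ through the inner $\bigoplus_j$ using left-distributivity in $R$, obtaining the double sum $\bigoplus_{i,j} \# A(e-i) \otimes \# B(i-j) \otimes \# C(j)$ (using associativity of $\otimes$), then apply the same substitution $i' = e-j$, $j' = e-i$ used in the earlier lemma to match it against $\# ((A \sqcap B) \sqcap C)(e)$. Here one also needs commutativity and associativity of $\oplus$ to freely rearrange the double sum, and one must check the sums are finite — which holds because the supports of $\# A, \# B, \# C$ are finite, so only finitely many terms are non-$I_0$. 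For distributivity, $\# (A \sqcap (B \sqcup C))(e) = \bigoplus_i \# A(e-i) \otimes (\# B(i) \oplus \# C(i))$; applying left-distributivity in $R$ termwise and then splitting the $\bigoplus_i$ into two sums (legitimate since $\oplus$ is commutative and associative) yields $\# ((A \sqcap B) \sqcup (A \sqcap C))(e)$.

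The main obstacle I anticipate is not any single axiom but the bookkeeping around \emph{well-definedness}: one must confirm that $\sqcup$ and $\sqcap$ actually produce elements of $S$ — i.e. that the result has finite support and that the ``at most one tuple $(e,*)$'' invariant is preserved — and in particular that the infinite-looking sum $\bigoplus_{i \in \mathbb{R}}$ defining $\sqcap$ is really a finite $\oplus$-sum in $R$ (so that it is meaningful even though $R$ need not admit infinite sums), and that its support is finite. This is the analogue of the parenthetical remark in Lemma \ref{lemma:rowcounting:semiring} about the summation being well-defined, and it should be dispatched with the observation that if $\# A$ is supported on a finite set $S_A \subset \mathbb{R}$ and $\# B$ on $S_B$, then $\# (A \sqcap B)$ is supported on the finite set $\{a + b : a \in S_A, b \in S_B\}$ and each of its values is a finite $\oplus$-combination of products from $R$. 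Once well-definedness is in hand, each axiom is a routine pointwise transcription of the corresponding $R$-axiom, and the proof goes through as above.
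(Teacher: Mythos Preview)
Your proposal is correct and follows essentially the same route as the paper: verifying the eight commutative-semiring axioms pointwise in $e$ by reducing each to the corresponding axiom of $(R,\oplus,\otimes,I_0,I_1)$, with the same reindexing trick for commutativity and associativity of $\sqcap$ and the same termwise distributivity argument. Your explicit treatment of well-definedness (finite support, finiteness of the $\bigoplus_i$) is more careful than the paper's own proof, which omits that point entirely here; the only cosmetic difference is that the paper uses the substitution $i'=j$, $j'=i-j$ for associativity of $\sqcap$, whereas you invoke the $i'=e-j$, $j'=e-i$ substitution from Lemma~\ref{lemma:rowcounting:semiring}---both work.
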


\begin{proof}
we prove the following claims respectively:
    \begin{enumerate}
        \item $A \sqcup B = B \sqcup A$
        \item $A \sqcup (B \sqcup C) =( A \sqcup B) \sqcup C$
        \item $A \sqcup E_0 = A$
        \item $A \sqcap B = B \sqcap A$
        \item $A \sqcap (B \sqcap C) =( A \sqcap B) \sqcap C$
        \item $A \sqcap E_0 = E_0$
        \item $A \sqcap E_1 = A$
        \item $A \sqcap (B \sqcup C) = (A \sqcap B) \sqcup (A \sqcap C)$
    \end{enumerate}
    
    Based on the definition, $C = A \sqcup B$ if and only if $\# C(e) = \# A(e) \oplus \# B(e)$; since $\oplus$ is commutative and associative, $\sqcup$ will be commutative and associative as well. Furthermore, $\# A(e) \oplus \# E_0(e) = \# A(e) \oplus I_0 = \# A(e)$; therefore, $A \sqcup E_0 = A$.
    
    Let $C= A \sqcap B$, using the commutative property of $\otimes$ and change of variables, we can prove the fourth claim:
    \begin{align*}
        \# C(e) &= \bigoplus_{i \in \mathbb{R}} \# A(e-i) \otimes \# B(i)
        \\ &= \bigoplus_{i \in \mathbb{R}} \# B(i) \otimes \# A(e-i)
        \\ &= \bigoplus_{j \in \mathbb{R}} \# B(e-j) \otimes \# A(j)
    \end{align*}
    
    Similarly, using change of variables $j'=i-j$ and $i'=j$, and semiring properties of the $\otimes$ and $\oplus$, we have:
    \begin{align*}
        & \bigoplus_{i \in \mathbb{R}} \# A(e-i) \otimes (\bigoplus_{j \in \mathbb{R}} \# B(i-j) \otimes \# C(j))
        \\
        =&\bigoplus_{i \in \mathbb{R}} \bigoplus_{j \in \mathbb{R}} (\# A(e-i) \otimes  \# B(i-j)) \otimes \# C(j)
        \\
        =&\bigoplus_{i' \in \mathbb{R}} (\bigoplus_{j' \in \mathbb{R}} \# A(e-i'-j') \otimes  \# B(j')) \otimes \# C(i')
    \end{align*}
    
    Therefore, $A\sqcap (B\sqcap C) = (A \sqcap B) \sqcap C$.
    
    The claim $A \sqcap E_0 = E_0$ can be proved by the fact that $\# E_0(i) = I_0$ for all the elements and $\# A(e-i) \otimes I_0 = I_0$. Also we have $\bigoplus_{i \in \mathbb{R}} \# A(e-i) \otimes \# E_1(i) = \# A(e)$ because, $\# E_1(i) = I_0$ for all nonzero values of $i$ and it is $I_1$ for $e=0$; therefore, $A \sqcap E_1 = A$.
    
    Let $D = A \sqcap (B \sqcup C)$, the last claim can be proved by the following:
    \begin{align*}
        \# D(e) &= \bigoplus_{i \in \mathbb{R}} \# A(e-i) \otimes (\# B(i) \oplus \# C(i))
        \\
        &= \bigoplus_{i \in \mathbb{R}} \big( (\# A(e-i) \otimes \#B(i)) \oplus (\# A(e-i) \otimes \# C(i)) \big)
        \\
        &= \big(  \bigoplus_{i \in \mathbb{R}} (\# A(e-i) \otimes \#B(i)) \big) \oplus \big( \bigoplus_{i \in \mathbb{R}} (\# A(e-i) \otimes \#C(i)) \big) 
    \end{align*}
    where the last line is the definition of $(A \sqcap B) \sqcup (A \sqcap C)$.
\end{proof}

 For each column $i  \in [d]$, we define the function $\mathcal{F}_{i}$ to be
$ \{(g_{i}(x_i) ,  F_{i}(x_i) ) \}$ if $F_{i}(x_i) \neq I_0$  and the empty set otherwise. 
 Our algorithm for computing $Q({\mathcal L}(J))$ runs the Inside-Out algorithm on the SumProd query:
\begin{align*}
    \widehat{Q} = \sqcup_{x \in J} \sqcap_{i=1}^d \mathcal{F}_{i}(x_i)
\end{align*}
and returns $\bigoplus_{e \leq L} \# \widehat{Q}(e)$.

\begin{lemma}
\label{lemma:resultofalgorithm}
This algorithm correctly computes $Q({\mathcal L}(J))$.
\end{lemma}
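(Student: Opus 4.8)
The plan is to show two things in sequence: first, that the exact SumProd query $\widehat{Q} = \sqcup_{x \in J} \sqcap_{i=1}^d \mathcal{F}_i(x_i)$ over the structured semiring $(S,\sqcup,\sqcap,E_0,E_1)$ evaluates to the set $\{\, (e, v_e) \mid e \in \mathbb{R}\,\}$ where $v_e = \bigoplus_{x \in J \,:\, \sum_i g_i(x_i) = e} \bigotimes_{i=1}^d F_i(x_i)$ (with tuples whose $v$-component is $I_0$ omitted); and second, that summing the multiplicities over all $e \le L$ recovers $Q({\mathcal L}(J)) = \bigoplus_{x \in {\mathcal L}(J)} \bigotimes_{i=1}^d F_i(x_i)$, since ${\mathcal L}(J)$ is exactly the set of rows with $\sum_i g_i(x_i) \le L$. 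Because Lemma~\ref{lemma:generated:semiring} guarantees $(S,\sqcup,\sqcap,E_0,E_1)$ is a genuine commutative semiring, the Inside-Out algorithm of Section~\ref{sect:insideout} correctly evaluates $\widehat{Q}$ regardless of the join structure, so all that remains is to verify what $\widehat{Q}$ computes.

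For the first part, I would argue by a direct unfolding of the two operators, mirroring the computation in Lemma~\ref{lemma:multiseturn} but now tracking the $R$-valued multiplicities. First, $\sqcap_{i=1}^d \mathcal{F}_i(x_i)$: each $\mathcal{F}_i(x_i)$ is the singleton $\{(g_i(x_i), F_i(x_i))\}$ when $F_i(x_i) \neq I_0$ and $E_0$ otherwise. Using the definition $\# (A \sqcap B)(e) = \bigoplus_{j \in \mathbb{R}} \# A(e-j) \otimes \# B(j)$, an induction on $d$ shows that $\sqcap_{i=1}^d \mathcal{F}_i(x_i)$ equals $\{(\sum_{i=1}^d g_i(x_i),\ \bigotimes_{i=1}^d F_i(x_i))\}$ when every $F_i(x_i) \neq I_0$, and equals $E_0$ as soon as some $F_i(x_i) = I_0$ — the latter because $A \sqcap E_0 = E_0$ (claim 6 of Lemma~\ref{lemma:generated:semiring}). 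Note this matches the row's contribution $\bigotimes_i F_i(x_i)$ to the original query, including the annihilation behavior. Then applying $\sqcup$ over all $x \in J$ and using $\# (A \sqcup B)(e) = \# A(e) \oplus \# B(e)$, the multiplicity at key $e$ in $\widehat{Q}(J)$ is $\bigoplus_{x \in J}$ (contribution of $x$ restricted to those $x$ with $\sum_i g_i(x_i) = e$), which is precisely $v_e$; keys with $v_e = I_0$ simply do not appear, consistent with the base-set constraint on $S$.

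For the second part, I would observe that $\bigoplus_{e \le L} \#\widehat{Q}(e) = \bigoplus_{e \le L} \bigoplus_{x : \sum_i g_i(x_i) = e} \bigotimes_i F_i(x_i) = \bigoplus_{x : \sum_i g_i(x_i) \le L} \bigotimes_i F_i(x_i)$, where reindexing the double sum is justified by commutativity and associativity of $\oplus$, and the final expression is exactly $Q({\mathcal L}(J))$ by the definition of ${\mathcal L}(J)$ in equation~\eqref{equality:rfaqli-sumprod}. One technical point worth a sentence: the sum $\bigoplus_{e \le L}$ ranges over only finitely many $e$, since $\widehat{Q}(J) \in S$ is a finite set, so the expression is well-defined. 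The main obstacle, such as it is, is the bookkeeping in the induction for $\sqcap_{i=1}^d \mathcal{F}_i(x_i)$ — one must handle the $I_0$-annihilation case carefully so that a row $x$ with some $F_i(x_i) = I_0$ contributes nothing, rather than contributing a spurious tuple with an $I_0$ multiplicity (which the definition of $S$ forbids anyway); but this follows cleanly from $A \sqcap E_0 = E_0$. Everything else is a routine rewriting using the semiring axioms already established in Lemma~\ref{lemma:generated:semiring}.
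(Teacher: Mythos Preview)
Your proposal is correct and follows essentially the same approach as the paper: unfold $\sqcap_{i=1}^d \mathcal{F}_i(x_i)$ to the singleton $\{(\sum_i g_i(x_i),\,\bigotimes_i F_i(x_i))\}$, then aggregate via $\sqcup$ to obtain $\#\widehat{Q}(e) = \bigoplus_{x:\sum_i g_i(x_i)=e}\bigotimes_i F_i(x_i)$, and finally reindex the double sum $\bigoplus_{e\le L}\#\widehat{Q}(e)$ to recover $Q(\mathcal{L}(J))$. Your treatment is in fact slightly more careful than the paper's in explicitly handling the $I_0$-annihilation case and the finiteness of the outer sum, but the structure is identical.
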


\begin{proof}
    We can rewrite the generated FAQ as follow:
    \begin{align*}
        \hat{Q} &= \sqcup_{x \in J} \sqcap_{i=1}^d \mathcal{F}_{i}(x_i)
        \\ &= \sqcup_{x \in J} \sqcap_{i=1}^d \{ ( g_{i}(x_i) , F_{i}(x_i)  ) \}
        \\
        &= \sqcup_{x \in J}  \left\{ \left( \sum_{i=1}^d g_{i}(x_i) , \bigotimes_{i=1}^d F_{i}(x_i)  \right) \right\}
    \end{align*}
    Then the operator $\sqcup$ returns a set of pairs $(e,v)$ such that for each value $e$, $v = \# \hat{Q}(e)$ is the aggregation using $\oplus$ operator over the rows of $J$ where $\sum_{i=1}^d g_{i}(x_i) = e$. More formally,
    \begin{align*}
        \# \hat{Q}(e) = \bigoplus_{x \in J, \sum_{i} g_{i}(x_i) = e} \left(\bigotimes_{i=1}^d F_{i}(x_i)\right)
    \end{align*}
    Therefore, the value returned by the algorithm is
    \begin{align*}
        \bigoplus_{e \leq L} \# \hat{Q}(e)
        =& \bigoplus_{e \leq L} \bigoplus_{x \in J, \sum_{i} g_{i}(x_i) = e} \left(\bigotimes_{i=1}^d F_{i}(x_i)\right)
        \\ =& \bigoplus_{x \in \mathcal{L}(J)} \bigotimes_{i=1}^d F_{i}(x_i)
    \end{align*}
\end{proof}

\subsection{Applying Sketching}
\label{subsect:sumprodsketch}

For a set $A\in S$ define $\triangle A(\ell)$ to be $\oplus_{e \leq \ell} \# A(e)$. Note that $\triangle A(\ell)$ will be monotonically increasing if $\oplus$ is monotonically increasing, and it will be monotonically decreasing if $\oplus$ is monotonically decreasing.

Conceptually an $\epsilon$-sketch $\mathbb{S}_\epsilon(A)$ of an element $A \in S$ rounds all multiplicities up to an integer power of $(1+\epsilon)$.
Formally the $\epsilon$-sketch $	\mathbb{S}_\epsilon(A)$ of $A$ is the element $A'$ of $S$ satisfying
\begin{align*}
    \# A'(e) = \begin{cases}
    \bigoplus_{L_k < e \leq U_k} \# A(e) & \text{if } \exists k \; e=U_k
    \\
    I_0 & \text{otherwise}
    \end{cases}
\end{align*}
where
$$L_0 = \min \{e \in \mathbb{R} \vert \: \triangle A(e) \leq 0 \}$$ 
and for $k \ne 0$
$$L_k = \min \{e \in \mathbb{R} \vert \: \rho (1+\epsilon)^{k-1} \leq \triangle A(e) \leq \rho (1+\epsilon)^{k} \}$$
and where
$$U_0 = \max \{e \in \mathbb{R} \vert \: \triangle A(e) \leq 0 \}$$
and for $k \ne 0$
$$U_k = \max \{e \in \mathbb{R} \vert \: \rho (1+\epsilon)^{k-1} \leq \triangle A(e) \leq \rho (1+\epsilon)^{k} \}$$
where
$\rho = \min \{\# A(e) \:\vert\: \#  e \in \mathbb{R}\text{ and } \# A(e)>0\}$. For the special case that $\# A(e) \leq 0$ for all $e \in \mathbb{R}$, we only have $L_0$ and $U_0$. Note that the only elements of $R$ that can be zero or negative are $I_0$ and $I_1$; therefore, in this special case, $\# A(e)$ for all the elements $e$ is either $I_0$ or $I_1$.
% $\rho = \min_{e \in \mathbb{R}} \# A(e)$.

\begin{lemma}
\label{Lemma:General:Sketching}
For all $A \in S$, for all $e \in \mathbb{R}^+$, if $A' = \mathbb{S}_\epsilon(A)$ then $$\triangle A(e)/(1+\epsilon) \leq \triangle A'(e) \leq   (1+\epsilon)\triangle A(e)$$
\end{lemma}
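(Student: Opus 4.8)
The plan is to reduce to a single monotonicity direction, characterize $\triangle A'$ exactly in terms of $\triangle A$ evaluated at the retained points, and then read off the $(1+\epsilon)$ bound from the geometric placement of the buckets. Since $\oplus$ is monotone, $\triangle A(\ell)=\oplus_{e'\le \ell}\#A(e')$ is a monotone step function of $\ell$; I carry out the argument assuming $\oplus$ is monotone increasing (so $\triangle A$ is non-decreasing), the decreasing case being handled symmetrically (all inequalities and the direction of the geometric grid flip, with $\oplus_{e'\ge\ell}\#A(e')$ playing the role of $\triangle A$). I first dispose of the degenerate part: on the prefix where $\triangle A(e)\le 0$ (i.e.\ $e\le U_0$) every multiplicity is $I_0$ or $I_1$, and the sketch collapses this region to $U_0$ with the same $\oplus$-aggregate, so $\triangle A'$ agrees with $\triangle A$ there; and for $e$ below the whole support both sides are $I_0$. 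So from now on fix $e\in\mathbb{R}^+$ with $\triangle A(e)>0$.

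Next I characterize $\triangle A'(e)$. Writing $a_1<\dots<a_p$ for the support of $A$, the buckets partition the $a_i$ into consecutive groups whose top elements are the $U_k$, and $A'$ is supported exactly on the $U_k$ with $\#A'(U_k)$ the $\oplus$-aggregate of the multiplicities in group $k$. Using commutativity and associativity of $\oplus$, the groups $1,\dots,k$ tile the support below $U_k$, so $\triangle A'(U_k)=\triangle A(U_k)$ for every $k$. Consequently, letting $U_{k^\ast}$ be the largest retained point with $U_{k^\ast}\le e$ (which exists since $\triangle A(e)>0$), we get $\triangle A'(e)=\triangle A(U_{k^\ast})$.

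The two inequalities now follow. For the upper bound, $U_{k^\ast}\le e$ and $\triangle A$ non-decreasing give $\triangle A'(e)=\triangle A(U_{k^\ast})\le \triangle A(e)\le(1+\epsilon)\triangle A(e)$. For the lower bound, let $U_{k^\ast+1}$ be the next retained point; if it does not exist then $e$ is past the whole support and $\triangle A'(e)=\triangle A(e)$, so assume it exists. By the choice of $k^\ast$ we have $e<U_{k^\ast+1}$, hence $\triangle A(e)\le\triangle A(U_{k^\ast+1})$. The point of the bucketing is precisely that consecutive retained cumulatives are within a $(1+\epsilon)$ factor: group boundaries are placed where $\triangle A$ crosses the geometric grid $\rho(1+\epsilon)^{0},\rho(1+\epsilon)^{1},\dots$ with $\rho$ the smallest positive multiplicity, so $\triangle A(U_{k^\ast+1})\le(1+\epsilon)\,\triangle A(U_{k^\ast})$. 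Chaining, $\triangle A(e)\le(1+\epsilon)\,\triangle A(U_{k^\ast})=(1+\epsilon)\,\triangle A'(e)$, i.e.\ $\triangle A'(e)\ge\triangle A(e)/(1+\epsilon)$.

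The step I expect to be the main obstacle is the claim that consecutive retained cumulatives are within $(1+\epsilon)$, which is not automatic from the power-of-$(1+\epsilon)$ bucketing when $\triangle A$ makes a large jump, i.e.\ when a single $a_i$ carries a multiplicity that dwarfs the running aggregate: then a whole range of grid cells is skipped, the bucket just below $e$ could a priori sit several cells down (threatening a $(1+\epsilon)^{O(1)}$ loss), and one must also ensure the jumping element is not dropped when it ends up alone in its bucket. The remedy is to make every such jump element its own singleton retained point, so the previous group's cumulative is recovered exactly at its end and the jump element is kept verbatim; with that convention each step between consecutive retained cumulatives is either a bona fide $[\rho(1+\epsilon)^{k-1},\rho(1+\epsilon)^{k}]$ bucket or a verbatim jump, and in both cases the ratio is at most $1+\epsilon$. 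Carefully verifying this case split, together with the bookkeeping of which endpoint of each bucket owns a shared boundary point so that no multiplicity is double counted or lost in the telescoping that identifies $\triangle A'(U_k)$ with $\triangle A(U_k)$, is where the real work lies; everything else is the routine telescoping and the trivial $I_0$ cases.
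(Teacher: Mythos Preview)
Your approach mirrors the paper's exactly: establish the telescoping identity $\triangle A'(U_k)=\triangle A(U_k)$ at the retained points, then for general $e$ sandwich $\triangle A'(e)$ between two consecutive retained values and invoke the $(1+\epsilon)$-geometric spacing of the buckets. The paper's proof does just this, taking $k$ so that $e\in(L_k,U_k]$ and chaining $\triangle A'(U_{k-1})\le\triangle A'(e)\le\triangle A'(U_k)$ against $\triangle A(U_{k-1})\le\triangle A(e)\le\triangle A(U_k)$.

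Your worry about large jumps is well-founded, and here you are being more careful than the paper. The paper simply asserts that consecutive bucket intervals abut (its proof uses $U_{k-1}$ freely and implicitly treats $\triangle A(U_{k-1})$ as sitting on the grid point $\rho(1+\epsilon)^{k-1}$), without addressing what happens when a single multiplicity is so large that $\triangle A$ leaps over several geometric cells, leaving the intermediate buckets empty so that $U_{k-1}$ need not even exist. In that scenario the last retained point at or below an $e\in[L_{k'},U_{k'})$ may sit many grid cells down, and the bound can genuinely fail for the sketch as literally defined; so the obstacle you flag is a gap in the paper's argument as well. Your remedy of isolating each jump element as its own retained point is a change to the sketch definition rather than to the proof, but it is the natural patch. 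One clarification on your final paragraph: at a verbatim jump the ratio of consecutive retained cumulatives $\triangle A(a_i)/\triangle A(U_k)$ need \emph{not} be bounded by $1+\epsilon$; what actually saves the argument is that there are no support points strictly between $U_k$ and $a_i$, so $\triangle A$ and $\triangle A'$ agree on that entire stretch and the large ratio is never witnessed by any $e$.
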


\begin{proof}
Since $\triangle A(e)$ is monotone, the intervals $[L_k, U_k]$ do not have any overlap except over the points $L_k$ and $U_k$, and if the $\triangle A(e)$ is monotonically increasing, then $L_k = U_{k+1}$; and if $\triangle A(e)$ is monotonically decreasing, then $L_k = U_{k-1}$. 

For any integer $j$ we have:
\begin{align}
\label{equality:sketch:general}
\begin{split}
    \triangle A(U_j)
    &=\bigoplus_{i \leq U_j}\# A(i) 
    = \bigoplus_{k\leq j} \bigoplus_{L_k <i \leq U_k}\# A(i)
    = \bigoplus_{k\leq j} \# A'(U_k)
    \\
    &= \bigoplus_{i \leq U_j} \# A'(i)
    =  \triangle A'(U_j)
\end{split}
\end{align}

Now, first we assume $\triangle A(e)$ is monotonically increasing and prove the lemma. After that, we do the same for the monotonically decreasing case.
Given a real value $e$, let $k$ be the integer such that $e \in (L_k, U_k]$. Then using the definition of $U_k$ and Equality \eqref{equality:sketch:general} we have:
\begin{align*}
    \triangle A(e)/(1+\epsilon) &\leq
    \triangle A(U_k)/(1+\epsilon) = \triangle A(U_{k-1})
    = \triangle A'(U_{k-1})
    \\
    &\leq \triangle A'(e) 
    \leq \triangle A'(U_{k}) 
    = \triangle A(U_{k})
    \\
    &= (1+\epsilon) \triangle A(U_{k-1})
    \leq (1+\epsilon) \triangle A(e)
\end{align*}
Note that in the above inequalities, for the special case of $k=0$, we can use $L_k$ instead of $U_{k-1}$.
Similarly for monotonically decreasing case we have:
\begin{align*}
    \triangle A(e)/(1+\epsilon) &\leq
    \triangle A(U_{k-1})/(1+\epsilon) = \triangle A(U_{k})
    = \triangle A'(U_{k})
    \\
    &\leq \triangle A'(e) 
    \leq \triangle A'(U_{k-1}) 
    = \triangle A(U_{k-1})
    \\
    &= (1+\epsilon) \triangle A(U_{k})
    \leq (1+\epsilon) \triangle A(e)
\end{align*}
\end{proof}

Then our algorithm runs the Inside-Out algorithm, with the operation $\sqcup$ replaced by an operation $\bigcirc$, defined by
$A \bigcirc B = \mathbb{S}_\alpha( A \sqcup B)$, and with the operation $\sqcap$ replaced by an operation $\odot$, defined by $A \odot B = \mathbb{S}_\alpha(A \sqcap B)$, where $\alpha = \frac{\epsilon}{m^2 \log(n)+m}$. That is, the operations $\bigcirc$ and $\odot$ are the sketched versions of $\sqcup$ and $\sqcap$. 
Our algorithm returns $\triangle A(L) = \oplus_{e\leq L} \# A(e)$.

Because 
 $\bigcirc$ and $\odot$ do not necessarily form a semiring, Inside-Outside may not  return
$\bigcirc_{x \in J} \odot_{i = 1}^m F_i(x_i)$.
However, Lemma \ref{Lemma:General:Operator:Approx} bounds the error introduced by each application of $\bigcirc$ and $\odot$. 
This makes it possible in
Theorem~\ref{thm:sumprodmain} to bound the error of Inside-Out's output.

\begin{lemma}
\label{Lemma:General:Operator:Approx}
Let $A'= \mathbb{S}_{\beta}(A)$,  $B'= \mathbb{S}_\gamma(B)$,  $C=A \sqcup B$, $C'=A' \sqcup B'$, $D=A \sqcap B$, and $D'=A' \sqcap B'$. Then, for all $e \in \mathbb{R}$ we have:
\begin{enumerate}
    \item $\frac{\triangle C(e)}{1+\max(\beta, \gamma)} \leq \triangle C'(e) \leq (1+\max(\beta, \gamma)) \triangle C(e)$
    \item $\frac{\triangle D(e)}{(1+\beta)(1+ \gamma)} \leq \triangle D'(e) \leq (1+\beta)(1+\gamma) \triangle D(e)$
\end{enumerate}
\end{lemma}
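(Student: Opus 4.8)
The plan is to mimic, almost line for line, the proof of Lemma~\ref{lemma:epsilonsketch}, replacing the elementary arithmetic used there by the abstract hypotheses on the semiring: Lemma~\ref{Lemma:General:Sketching} now plays the role that Lemma~\ref{lemma:approx:guarantee:single:sketch} played, the property ``$\oplus$ introduces no error'' replaces the fact that adding two approximations keeps the worse of the two relative errors, and the property ``$\otimes$ has bounded error'' replaces the fact that multiplying two approximations multiplies their relative errors. The first thing to set up is that the prefix-aggregate operator $\triangle$ commutes nicely with $\sqcup$ and $\sqcap$. From $\# C'(e) = \# A'(e) \oplus \# B'(e)$ and commutativity/associativity of $\oplus$ one gets $\triangle C'(e) = \triangle A'(e) \oplus \triangle B'(e)$, and likewise $\triangle C(e) = \triangle A(e) \oplus \triangle B(e)$. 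For the product, introduce the intermediate object $D'' = A \sqcap B'$ exactly as in Lemma~\ref{lemma:epsilonsketch}; from $\# D''(e) = \bigoplus_{i \in \mathbb{R}} \# A(e-i) \otimes \# B'(i)$, associativity, commutativity and distributivity let one interchange the two $\oplus$-sums to obtain $\triangle D''(e) = \bigoplus_{i \in \mathbb{R}} \# A(i) \otimes \triangle B'(e-i)$, and symmetrically $\triangle D''(e) = \bigoplus_{i \in \mathbb{R}} \# B'(i) \otimes \triangle A(e-i)$ and $\triangle D'(e) = \bigoplus_{i \in \mathbb{R}} \# B'(i) \otimes \triangle A'(e-i)$, with the corresponding identities for $\triangle D(e)$.

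For claim~(1): Lemma~\ref{Lemma:General:Sketching} says $\triangle A(e)/(1+\beta) \le \triangle A'(e) \le (1+\beta)\triangle A(e)$ and $\triangle B(e)/(1+\gamma) \le \triangle B'(e) \le (1+\gamma)\triangle B(e)$. Applying ``$\oplus$ introduces no error'' to $\triangle C'(e) = \triangle A'(e) \oplus \triangle B'(e)$ and $\triangle C(e) = \triangle A(e) \oplus \triangle B(e)$ immediately gives $\triangle C(e)/(1+\max(\beta,\gamma)) \le \triangle C'(e) \le (1+\max(\beta,\gamma))\triangle C(e)$.

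For claim~(2): first pass from $D = A \sqcap B$ to $D'' = A \sqcap B'$. By Lemma~\ref{Lemma:General:Sketching}, each $\triangle B'(e-i)$ is a $(1+\gamma)$-approximation of $\triangle B(e-i)$, while $\# A(i)$ is exact; so ``$\otimes$ has bounded error'', used with one relative error equal to $0$, shows each term $\# A(i) \otimes \triangle B'(e-i)$ is a $(1+\gamma)$-approximation of $\# A(i) \otimes \triangle B(e-i)$. Aggregating those terms with $\oplus$ and iterating ``$\oplus$ introduces no error'' over the (finitely many contributing) indices $i$ — the maximum of finitely many copies of $\gamma$ is $\gamma$ — yields $\triangle D(e)/(1+\gamma) \le \triangle D''(e) \le (1+\gamma)\triangle D(e)$. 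Now repeat the same three-step argument ($\triangle$ bound on $A$ from Lemma~\ref{Lemma:General:Sketching}, then bounded error of $\otimes$ with $\# B'(i)$ exact, then iterated ``no error'' of $\oplus$) using the symmetric representations above to pass from $D''$ to $D' = A' \sqcap B'$, obtaining $\triangle D''(e)/(1+\beta) \le \triangle D'(e) \le (1+\beta)\triangle D''(e)$. Composing the two estimates gives the claimed factor $(1+\beta)(1+\gamma)$.

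\textbf{Main obstacle.} The semiring bookkeeping above is routine; the delicate points are the degeneracies of the domain $R = \mathbb{R}^+ \cup \{I_0\} \cup \{I_1\}$. Lemma~\ref{Lemma:General:Sketching} is stated for strictly positive aggregates, so the case where some $\triangle A(e)$ or $\triangle B(e-i)$ equals $I_0$ or $I_1$ must be handled separately, using the remark (made just before Lemma~\ref{Lemma:General:Sketching}) that in that regime every $\# A(e)$ is $I_0$ or $I_1$ and the sketch acts trivially; one must also double-check that ``$\oplus$ introduces no error'' and ``$\otimes$ has bounded error'' are meaningful and applicable at those boundary values, or argue them by hand. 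Finally, one should justify the interchange of the two $\oplus$-sums and the iteration of ``$\oplus$ introduces no error'' over $\bigoplus_i$; this is legitimate because only finitely many $i$ contribute (as in the well-definedness remark for $\sqcap$) and because $\oplus$ is monotone, which is precisely what makes $\triangle$ monotone and hence Lemma~\ref{Lemma:General:Sketching} applicable. I expect essentially all the genuine work of the proof to lie in these boundary-value verifications.
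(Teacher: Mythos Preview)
Your proposal is correct and follows essentially the same route as the paper: derive $\triangle C'(e)=\triangle A'(e)\oplus\triangle B'(e)$ and $\triangle D(e)=\bigoplus_i \#B(i)\otimes\triangle A(e-i)$, apply Lemma~\ref{Lemma:General:Sketching} to the inner $\triangle$'s, and then use the ``no error'' and ``bounded error'' hypotheses together with the intermediate $D''$ to chain the two factors $(1+\beta)$ and $(1+\gamma)$. The only difference is cosmetic (the paper takes $D''=A'\sqcap B$ rather than $A\sqcap B'$) and you are in fact more careful than the paper about the $I_0/I_1$ boundary cases, which the paper's proof silently ignores.
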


\begin{proof}
The first claim follows from the assumption that $\oplus$ does not introduce any error and it can be proved by the following:
\begin{align*}
    &(\triangle A(e) \oplus \triangle B(e))/(1+\max(\beta, \gamma))
    \\ 
    \leq& \#C(e) = \triangle A'(e) \oplus \triangle B'(e)
    \\
    \leq& (1+\max(\beta, \gamma))(\triangle A(e) \oplus \triangle B(e))
\end{align*}

The second claim can be also proved similarly; based on definition of $\sqcap$
, we have
\begin{align*}
    \triangle D(e) &= 
    \bigoplus_{j \leq e}\bigoplus_{i \in \mathbb{R}} (\#A (j-i) \otimes \#B (i))
    \\
    &=\bigoplus_{i \in \mathbb{R}} \bigoplus_{j \leq e} (\#A (j-i) \otimes \#B (i))
    \\
    &=\bigoplus_{i \in \mathbb{R}} (\#B(i) \otimes \bigoplus_{j \leq e} \#A (j-i) )
    \\
    &=\bigoplus_{i \in \mathbb{R}} (\#B (i) \otimes \triangle A (e-i))
\end{align*}
Let $D'' = A' \sqcap B$, then based on the approximation guarantee of $\triangle A'(e)$ and the error properties of $\otimes$ and $\oplus$, we have
\begin{align*}
    \triangle D(e)/(1+\beta) \leq \triangle D''(e) \leq (1+\beta) \triangle D(e)
\end{align*}
Then the second claim follows by replacing $B$ with $B'$ in $D''$ and repeat the above step.
\end{proof}

% \begin{theorem}
% \label{theorem:SumProd:timecomplexity}
% Our algorithm achieves an $(1+\epsilon)$-approximation to the Row Count Inequality query $Q({\mathcal L}(J))$
% in time $O((\frac{1}{\epsilon} m  \log(n) \log R)^2 (d^2 m n^h \log(n)))$~\kirk{Alireza, note that I got rid of the $j$ term here. please check.}, where $R$ is from the statement of the theorem.\kirk{Alirea, please fill in.}

% \end{theorem}
Now we can prove the existence of an algorithm for approximating SumProd FAQ-AI(1) queries. 

\begin{proof}[Proof of Theorem \ref{thm:sumprodmain}]
We first consider the approximation ratio.
Inside-Out on the query $\tilde Q(J)$ performs
the same semiring operations as does on the 
query 
$\widehat{Q}(J)$, but it additionally applies
the  $\alpha$-Sketching operation over each partial results, 
 meaning the algorithm applies $\alpha$-Sketching after steps \ref{step:aggregation},\ref{step:multiplication}, and \ref{step:roottable}. Lets look at each iteration of applying steps \ref{step:aggregation} and \ref{step:multiplication}. Each value produced in the steps \ref{step:aggregation} and \ref{step:roottable} is the result of applying $\bigcirc$ over at most $n^m$ different values (for acyclic queries it is at most $n$). Using Lemma \ref{lemma:epsilonsketch} and the fact that the algorithm applies $\bigcirc$ first on each pair and then recursively on each pair of the results, the total accumulated error produced by each execution of steps \ref{step:aggregation} and \ref{step:roottable} is $m\log(n) \alpha$. Then, since the steps \ref{step:aggregation} and \ref{step:multiplication} will be applied once for each table, and \ref{step:multiplication} accumulates the errors produced for all the tables, the result of the query will be $(m^2\log(n)+m)\alpha$-Sketch of $\widehat{Q}(J)$.

We now turn to bounding the running time of our algorithm.
The time complexity of Inside-Out is $O(m d^2 n^h \log n)$ when the summation and product operators take a constant time~\cite{faq}. The size of each partial result set $A \in S$, after applying $\alpha$-sketching, will depend on the smallest positive value of $\triangle A(e)$ and the largest value of $\triangle A(e)$. Let $\beta$ and $\gamma$ be the minimum and maximum positive real value of SumProd query over all possible sub-matrices of the design matrix, the smallest and largest value of $\triangle A(e)$ for all partial results $A$ would be $\beta$ and $\gamma$ respectively; therefore, the size of the partial results after applying $\alpha$-Sketching is at most $O(\frac{\log(\gamma/\beta)}{\alpha})$. As a result, we compute each of $\bigcirc$ and
$\odot$ in time $O\left(\left(\frac{\log(\gamma/\beta)}{\alpha}\right)^2\right)$. 
Therefore our algorithm runs in time $O(\frac{1}{\epsilon^2}(m^2 \log(n) \log(\frac{\gamma}{\beta}))^2  (d^2 m n^{h} \log(n)))$ and the claim follows by the assumption that the log of the aspect ratio, $\log(\frac{\gamma}{\beta})$, is polynomially bounded.
\end{proof}

\section{Example Applications }
\label{sect:applications}

In this section we give example applications of our results. 

\paragraph{Inequality Row Counting:} Some example problems fow which we can use our Inequality Row Counting to obtain a RAS  in a straightforward manner: 
\begin{itemize}
    \item Counting the number of points on one side of a hyperplane, say the points $x$ satisfy $\beta \cdot x \le L$. 
    \item 
    Counting the number of points within a hypersphere of radius $r$ centered at a point $y$. The additive constraint is $\sum_{i=1}^d (x_i - y_i)^2 \le r^2$.
    \item
    Counting the number of points in a axis parallel ellispoid, say the points $x$ such that $\sum_{i=1}^d \frac{x_i^2}{\alpha_i^2}$ for some $d$ dimensional vector $\alpha$.
\end{itemize}

\paragraph{SumSum FAQ-AI(1) Queries}
Some examples of  problems that can be
reduced to SumSum FAQ-AI(1) queries and 
an application of Theorem \ref{thm:sumsummain} gives a RAS:
\begin{itemize}
    \item Sum of 1-norm distances from a point $y$ of points on one side of a hyperplane. The SumSum query is $\sum_{x \in J} \sum_{i=1}^d |x_i - y_i|$. One can easily verify the addition introduces no error and is repeatable. 
    \item
    Sum of 2-norm squared of points in a axis parallel ellipsoid. The SumSum query is $\sum_{x \in J} \sum_{i=1}^d x_i^2$.  
    \item
    Number of nonzero entries of points on one side of a hyperplane. The SumSum query is $\sum_{x \in J} \sum_{i=1}^d \mathbbm{1}_{x_i \ne 0}$. 
\end{itemize}

\paragraph{SumProd FAQ-AI(1) Queries}
Some examples of  problems that can be
reduced to SumProd FAQ-AI(1) queries and 
an application of Theorem \ref{thm:sumprodmain} gives a RAS:
    \begin{itemize}
        \item
         Finding the minimum 1-norm of any point in a hypersphere $H$ of radius $r$ centered at a point $y$. The SumProd query is $\min_{x \in J} \sum_{i=1}^d |x_i|$. Note $(\mathbb{R}^+ \cup \{0\} \cup \{+ \infty \}, \min, +, +\infty, 0)$ is a commutative semiring. The multiplication operator in this semiring, which is  addition, has bounded error. The addition operator, which is minimum, introduces no error and is monotone. The aspect ratio is at most 
         $(\max_{x \in J} \sum_{i=1}^d |x_i|) / (\min_{x \in J} \min_{ i \in [d] | x_i\neq 0} |x_i|)$, and thus the log of the aspect ratio is polynomially bounded. 
         \item
         Finding the point on the specified side of a hyperplane $H$ that has the maximum 2-norm  distance from a point $y$.
         The SumProd query is $\max_{x \in J} \sum_{i=1}^d (y_i - x_i)^2$. Note that this computes the point with the maximum 2-norm squared distance. One can not directly write a SumProd query to compute the point with the 2-norm distance; We need to appeal to the fact that the closest point is the same under both distance metrics. 
         Note $(\mathbb{R}^+ \cup \{0\} \cup \{- \infty \}, \max, +, -\infty, 0)$ is a commutative semiring. The multiplication operator in this semiring, which is  addition, has bounded error. The addition operator, which is maximum, introduces no error and is monotone. The aspect ratio is at most 
         $(\max_{x \in J} \sum_{i=1}^d |x_i|) / (\min_{x \in J} \min_{ i \in [d] | x_i\neq 0} |x_i|)$, and thus the log of the aspect ratio is polynomially bounded. 
    \end{itemize}

\paragraph{Snake Eyes:} Some  examples of problems for which our results apparently do not apply:

    \begin{itemize}
        \item Finding the minimum distance of any point on a specified side of a specified  hyperplane $H$ to $H$.
        So say the problem is to  find a point $x$ where  $\beta \cdot x \ge L$ and $x \cdot \beta$. 
        The natural SumProd query is $\min_{x \in J}  \sum_{i=1}^d x_i \beta_i$. Note that some of the $x_i \beta_i$ terms maybe be negative, so this doesn't fulfill the condition that the domain has to be over the positive reals. And this appears to be a non-trivial issue basically because good approximations of $s$ and $t$ does not in general allow one to compute a good approximation of $s-t$. We have been call  this the subtraction problem. Using a variation of the proof of Theorem \ref{thm:faq2} one can show that approximating this query to within an $O(1)$ factor is NP-hard. 
        \item Sum of entries of the points lying on one side of a hyperplane. The natural SumSum query is $\sum_{x \in J} \sum_{i=1}^d x_i$.
        Again as some of the  $x_i $ terms may be be negative, we run into the subtraction problem again. 
        \item Aggregate 2-norms of the rows in the design matrix. The natural query is $\sum_{x \in J} \left( \sum_{i=1}^d x_i^2 \right)^{1/2}$, which is neither a SumSum or a SumProd query. 
    \end{itemize}

\section{NP-hardness of FAQ-AI(2) Approximation}
\label{section:approximatehardness}

%We prove that RFAQ-LI, with bounded operator error, is NP-hard to approximate within any constant error bound when there are two linear inequalities. 
%{\bf Sungjin: what is ``with bounded operator error"?}

\begin{theorem}
\label{thm:faq2}
For all $c \ge 1$,    it is NP-hard to $c$-approximate the number of rows in the design matrix  (even for a cross product join) that satisfy  two (linear) additive inequalities. So it is $NP$-hard to $c$-approximate FAQ-AI(2). 
\end{theorem}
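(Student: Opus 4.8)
The plan is to reduce from the decision version of Subset Sum, which is NP-complete: given nonnegative integer weights $w_1, \dots, w_d$ and a target integer $C$, decide whether some subset of the weights sums to \emph{exactly} $C$. Given such an instance, I would build the FAQ-AI(2) instance exactly as in the proof of Theorem \ref{theorem:nphardness:exact}: create $d$ single-column tables $T_i$, each with two rows holding the entries $0$ and $w_i$, and let $J = T_1 \Join \cdots \Join T_d$ be their cross product. The rows of $J$ are then in bijection with subsets $S \subseteq \{1, \dots, d\}$, the entry in column $i$ of the row for $S$ being $w_i$ if $i \in S$ and $0$ otherwise.

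Next I would impose two linear additive inequalities capturing the equality constraint. Let $G_1$ consist of the functions $g_{1,j}(x_j) = x_j$ with $L_1 = C$, which encodes $\sum_j x_j \le C$; and let $G_2$ consist of the functions $g_{2,j}(x_j) = -x_j$ with $L_2 = -C$, which encodes $\sum_j x_j \ge C$. A row of $J$ satisfies both inequalities precisely when $\sum_{i \in S} w_i = C$. Hence the row-counting query $Q(\mathcal{L}(J)) = \sum_{x \in \mathcal{L}(J)} \prod_{i=1}^d 1$ equals the number of subsets of $\{w_1, \dots, w_d\}$ summing to exactly $C$, and the join used is a cross product, as required.

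The crux is the zero/positive gap. If the Subset Sum instance is a \emph{yes} instance then $Q(\mathcal{L}(J)) \ge 1$, and if it is a \emph{no} instance then $Q(\mathcal{L}(J)) = 0$. For any fixed $c \ge 1$, a $c$-approximation algorithm must output $0$ when the true value is $0$ (since $c \cdot 0 = 0$) and must output a strictly positive value when the true value is at least $1$ (since its output is at least $1/c$ times the true value). Therefore, running such an approximation algorithm on this instance and testing whether the output is $0$ decides Subset Sum in polynomial time, which establishes that $c$-approximating the row count under two additive inequalities is NP-hard.

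I do not expect a genuine technical obstacle here; the one point requiring care is the simple but essential observation that multiplicative approximation can never separate a positive quantity from $0$, which is exactly what makes the hardness hold uniformly for every $c \ge 1$ rather than only for $c$ close to $1$. It is also worth remarking on the contrast with FAQ-AI(1): the single-inequality reduction of Theorem \ref{theorem:nphardness:exact} used only $\sum_j x_j \le C$, which is approximable via the sketching-based RAS behind Theorem \ref{thm:counting}, whereas the ``exactly $C$'' requirement here genuinely needs the two-sided pair of inequalities, and it is precisely this two-sided (``subtraction''-flavored) constraint that destroys approximability.
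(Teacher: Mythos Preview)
Your proposal is correct and follows essentially the same strategy as the paper: encode an NP-hard exact-sum problem as a cross-product join together with two opposite linear inequalities that jointly force equality, then invoke the zero/positive gap to kill any bounded multiplicative approximation. The only cosmetic difference is that the paper reduces from \textsc{Partition} using tables with rows $\{w_i,-w_i\}$ and target $0$, whereas you reduce from \textsc{Subset Sum} using rows $\{0,w_i\}$ and target $C$; these are interchangeable variants of the same argument.
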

\begin{proof}
We reduce from the Partition problem,
where the input is a collection $W = \{w_1, w_2, ...,w_m\}$ of positive integers, and the question is whether one can partition $W$ into two parts with equal aggregate sums.
   From this instance we create $m$ tables, $T_1, T_2, \dots, T_m$, where each $T_i$ has a single columns and has two rows with entries  $w_i$ and $-w_i$. Let 
    $J$ be the cross product of these tables. Note that $J$ has exactly $2^m$ rows and each row $x \in J$ contains either $w_i$ or $-w_i$ for every $i$, which can be naturally interpreted as a partitioning that places each item $i$ in one part or the other, depending on the sign of $w_i$. The two (linear) additive inequalities are $(1, 1, \dots, 1) \cdot x \geq 0$ and $(-1, -1, \dots, -1)\cdot x \geq 0$. 
    Then the solution to the Row Counting SumProd query subject to these two constraints is the number of ways to partition $W$ into two parts of equal aggregate sum. 
\end{proof}

\bibliographystyle{abbrv}
\bibliography{faqai}

\newpage
%\onecolumn

%%
%% If your work has an appendix, this is the place to put it.
\appendix

\section{Background}
\label{app:background}

\subsection{Fractional edge cover number and output size bounds}
\label{app:agm}
In what follows, we consider a conjunctive query $Q$ over a relational database instance $I$.
We use $n$ to denote the size of the largest input relation in $Q$.
We also use $Q(I)$ to denote the output and $|Q(I)|$ to denote its size.
We use the query $Q$ and its hypergraph $\calH$ interchangeably.
\bdefn[Fractional edge cover number $\rho^*$]
Let $\calH=(\calV,\calE)$ be a hypergraph (of some query $Q$). Let $B\subseteq\calV$ be any subset
of vertices. 
A {\em fractional edge cover} of $B$ using edges in $\calH$ is a feasible
solution $\vec\lambda =(\lambda_S)_{S\in\calE}$ to the following linear
program:
\begin{eqnarray*}
	\min && \sum_{S\in\calE} \lambda_S\\
	\text{s.t.}&& \sum_{S : v \in S} \lambda_S \geq 1, \ \ \forall v \in B\\
	&& \lambda_S \geq 0, \ \ \forall S\in \calE.
\end{eqnarray*}
The optimal objective value of the above linear program is called
the {\em fractional edge cover number} of $B$ in $\calH$ and is denoted by $\rho^*_\calH(B)$.
When $\calH$ is clear from the context, we drop the subscript $\calH$ and use $\rho^*(B)$.

Given a conjunctive query $Q$, the fractional edge cover number of $Q$ is $\rho^*_\calH(\calV)$
where $\calH=(\calV,\calE)$ is the hypergraph of $Q$.
\edefn

\begin{theorem}[AGM-bound~\cite{AGMBound,GM06}]
Given a full conjunctive query $Q$ over a relational database instance $I$,
the output size is bounded by
\[|Q(I)| \leq n^{\rho^*},\]
where $\rho^*$ is the fractional edge cover number of $Q$.
\label{thm:agm-upperbound}
\end{theorem}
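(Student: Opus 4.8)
The plan is to prove the bound via an information-theoretic argument using a Shearer-type entropy inequality, which is the cleanest route to the AGM bound. The key idea is to place the uniform distribution on the output relation $Q(I)$ and relate the logarithm of its size to the Shannon entropy of a random output tuple; the marginal entropy restricted to each edge is controlled by the corresponding relation's size, and a fractional-cover version of Shearer's inequality then ties these marginals back to the full entropy with exactly the coefficients $\lambda_S$ of an optimal fractional edge cover.

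Concretely, I would first let $X = (X_v)_{v \in \calV}$ be a tuple drawn uniformly at random from $Q(I)$, so that its Shannon entropy satisfies $H(X) = \log |Q(I)|$. For each edge $S \in \calE$ the projection $X_S = (X_v)_{v \in S}$ is supported on the relation $R_S$, hence $H(X_S) \le \log |R_S| \le \log n$, since the entropy of a discrete variable is at most the logarithm of its support size. The problem then reduces to showing $H(X) \le \sum_{S \in \calE} \lambda_S H(X_S)$ for any feasible fractional edge cover $\vec\lambda$, because summing the marginal bounds gives $H(X) \le \big(\sum_S \lambda_S\big)\log n = \rho^* \log n$ when $\vec\lambda$ is optimal, and exponentiating yields $|Q(I)| \le n^{\rho^*}$.

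The heart of the argument, and the step I expect to be the main obstacle, is the fractional Shearer inequality $H(X) \le \sum_S \lambda_S H(X_S)$. I would prove it by fixing an arbitrary total order $\prec$ on $\calV$, writing $V_{<v}$ for the variables preceding $v$ and $S_{<v} = \{u \in S : u \prec v\}$. The chain rule gives both $H(X) = \sum_{v} H(X_v \mid X_{V_{<v}})$ and $H(X_S) = \sum_{v \in S} H(X_v \mid X_{S_{<v}})$. Since $S_{<v} \subseteq V_{<v}$ and conditioning on more variables only decreases entropy, each term satisfies $H(X_v \mid X_{S_{<v}}) \ge H(X_v \mid X_{V_{<v}})$. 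Substituting and exchanging the order of summation, $\sum_S \lambda_S H(X_S) \ge \sum_v H(X_v \mid X_{V_{<v}}) \sum_{S \ni v} \lambda_S \ge \sum_v H(X_v \mid X_{V_{<v}}) = H(X)$, where the last inequality uses the fractional-cover constraint $\sum_{S \ni v} \lambda_S \ge 1$ together with the nonnegativity of conditional entropy. The care needed here is in justifying the ``conditioning reduces entropy'' step (a consequence of the submodularity of entropy) and in confirming that the rearrangement of the double sum is valid, which is where the subtlety lies.

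Finally I would assemble the pieces: chaining the marginal bound $H(X_S) \le \log n$ through the Shearer inequality gives $\log|Q(I)| = H(X) \le \sum_S \lambda_S \log n$, and choosing $\vec\lambda$ to be an optimal fractional edge cover of $\calV$ makes $\sum_S \lambda_S = \rho^*$, completing the proof. An alternative, entirely combinatorial route would replace the entropy machinery with an induction on the number of variables using H\"older's inequality (the Bollob\'as--Thomason / Friedgut approach of the original AGM paper), but the entropy proof is shorter and I would present that.
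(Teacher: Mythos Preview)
Your proof is correct: the entropy/Shearer argument you outline is a standard and complete route to the AGM bound, and the steps you flag (chain rule, ``conditioning reduces entropy,'' swapping the double sum, and using the cover constraint $\sum_{S\ni v}\lambda_S\ge 1$) are all valid as stated.

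However, there is nothing to compare against here. The paper does not supply its own proof of this theorem; it is stated in the background appendix as a cited result from~\cite{AGMBound,GM06} with no accompanying argument. So your proposal is not an alternative to the paper's proof but rather a proof where the paper has none. For context, the original proof in~\cite{AGMBound} proceeds via H\"older's inequality and induction on the number of variables (the combinatorial route you mention at the end), while the entropy argument you give is the later, cleaner formulation; both are standard and yield the same bound.
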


\begin{theorem}[AGM-bound is tight~\cite{AGMBound,GM06}]
Given a full conjunctive query $Q$ and a non-negative number $n$,
there exists a database instance $I$ whose relation sizes are upper-bounded by $n$ and satisfies
\[|Q(I)| =\Theta(n^{\rho^*}).\]
\label{thm:agm-lowerbound}
\end{theorem}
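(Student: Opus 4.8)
The plan is to prove tightness by linear-programming duality together with an explicit ``product database'' construction, following the extremal-instance idea behind the AGM bound. First I would write down the LP dual of the fractional edge cover LP that defines $\rho^*=\rho^*_\calH(\calV)$. Its dual is the fractional vertex packing LP
\[
\max\ \sum_{v\in\calV} y_v \qquad\text{s.t.}\qquad \sum_{v\in S} y_v \le 1\ \ \forall S\in\calE,\qquad y_v\ge 0\ \ \forall v\in\calV .
\]
By strong LP duality the optimal value of this program is exactly $\rho^*$; I would fix an optimal dual solution $(y_v^*)_{v\in\calV}$ and use it to drive the construction.

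Next I would build the instance $I$. For each variable (vertex) $v\in\calV$, give attribute $v$ an active domain of size $\lfloor n^{y_v^*}\rfloor$ (taking $n$ large enough, depending only on $\calH$, that each of these is at least $1$). For each atom (hyperedge) $S\in\calE$, let the relation $R_S$ be the \emph{full} Cartesian product of the domains of the attributes it contains. Then
\[
|R_S| \;=\; \prod_{v\in S}\bigl\lfloor n^{y_v^*}\bigr\rfloor \;\le\; \prod_{v\in S} n^{y_v^*} \;=\; n^{\sum_{v\in S} y_v^*} \;\le\; n,
\]
where the last inequality is precisely dual feasibility of $y^*$ at the edge $S$. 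Hence all relation sizes respect the bound $n$.

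Then I would compute the output size. Since every $R_S$ is the full product over its attributes, a tuple over $\calV$ satisfies all the atoms iff each of its coordinates lies in the corresponding domain; thus $Q(I)$ is exactly the full Cartesian product over all $v\in\calV$ (here I use that the query is full, so no projection discards tuples), and
\[
|Q(I)| \;=\; \prod_{v\in\calV}\bigl\lfloor n^{y_v^*}\bigr\rfloor \;=\; \Theta\!\Bigl(\prod_{v\in\calV} n^{y_v^*}\Bigr) \;=\; \Theta\!\bigl(n^{\sum_{v\in\calV} y_v^*}\bigr) \;=\; \Theta\!\bigl(n^{\rho^*}\bigr).
\]
The matching upper bound $|Q(I)|\le n^{\rho^*}$ is Theorem~\ref{thm:agm-upperbound}, so together these give $|Q(I)|=\Theta(n^{\rho^*})$.

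The main technical obstacle is the rounding step: $n^{y_v^*}$ need not be an integer, and replacing it by $\lfloor n^{y_v^*}\rfloor$ could collapse a domain (most dangerously when $y_v^*=0$, or for small $n$), which would break the clean product structure. The fix is to observe that for $n$ above a constant threshold depending only on $\calH$ each floor costs at most a constant factor, so over the fixed vertex set $\calV$ only an $O(1)$ overall factor is lost, which is absorbed into the $\Theta(\cdot)$; the finitely many small values of $n$ are handled trivially. A secondary point to verify is that ``full conjunctive query'' is exactly the hypothesis needed so that the join of the full relations really is the full cross product over $\calV$ with no tuple eliminated, and that every variable occurs in some atom so that no vertex is isolated.
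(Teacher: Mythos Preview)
The paper does not actually prove this theorem: it appears in the background appendix as a cited result from \cite{AGMBound,GM06} with no proof given. Your argument is correct and is precisely the standard tightness proof from those references, namely taking an optimal fractional independent set $(y_v^*)$ by LP duality, giving each attribute a domain of size roughly $n^{y_v^*}$, and letting every relation be the full Cartesian product over its attributes; dual feasibility bounds each relation by $n$, and the full-query assumption makes the output the product over all vertices, of size $\Theta(n^{\rho^*})$. One small remark: your worry about $y_v^*=0$ is unfounded since then $\lfloor n^0\rfloor=1$, so no domain collapses; the genuine rounding loss is only the constant-factor slack from flooring each $n^{y_v^*}$, which you handle correctly.
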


\emph{Worst-case optimal join algorithms}~\cite{LFTJ,Ngo2012wcoj,skew} can be used to answer any full conjunctive query $Q$
in time
\begin{equation}
O(|\calV|\cdot|\calE|\cdot n^{\rho^*}\cdot \log n).
\label{eqn:runtime:lftj}
\end{equation}

\subsection{Tree decompositions, acyclicity, and width parameters}
\label{app:td}
\bdefn[Tree decomposition]
\label{defn:TD}
Let $\calH = (\calV, \calE)$ be a hypergraph.
A {\em tree decomposition} of $\calH$ is a pair $(T, \chi)$
where $T = (V(T), E(T))$ is a tree and $\chi : V(T) \to 2^{\calV}$ assigns to
each node of the tree $T$ a subset of vertices of $\calH$.
The sets $\chi(t)$, $t\in V(T)$, are called the {\em bags} of the 
tree decomposition.  There are two properties the bags must satisfy
\bi
\item[(a)] For any hyperedge $F \in \calE$, there is a bag $\chi(t)$, $t\in
V(T)$, such that $F\subseteq \chi(t)$.
\item[(b)] For any vertex $v \in \calV$, the set 
$\{ t \suchthat t \in V(T), v \in \chi(t) \}$ is not empty and forms a 
connected subtree of $T$.
\ei
\edefn

\bdefn[acyclicity]\label{defn:alpha-acyclic-td}
A hypergraph $\calH = (\calV, \calE)$ is {\em acyclic} iff
there exists a tree decomposition 
$(T, \chi)$ in which every bag $\chi(t)$ is a hyperedge of $\calH$.
\edefn

When $\calH$ represents a join query, the tree $T$ in the above 
definition
is also called the {\em join tree} of the query. 
A query is acyclic if and only if its hypergraph is acyclic.

For non-acyclic queries, we often need a measure of how ``close'' a query is to being acyclic. To that end, we use \emph{width} notions of a query.

\bdefn[$g$-width of a hypergraph: a generic width notion~\cite{adler:dissertation}]
\label{defn:g-width}
Let $\calH=(\calV,\calE)$ be a hypergraph, and
$g : 2^\calV \to \mathbb R^+$ be a function that assigns a non-negative
real number to each subset of $\calV$.
The {\em $g$-width} of a tree decomposition $(T, \chi)$ of $\calH$ is 
$\max_{t\in V(T)} g(\chi(t))$.
The {\em $g$-width of $\calH$} is the {\em minimum} $g$-width
over all tree decompositions of $\calH$.
(Note that the $g$-width of a hypergraph is a {\em Minimax} function.)
\edefn

\bdefn[{\em Treewidth} and {\em fractional hypertree width} are special cases of {\em $g$-width}]
Let $s$ be the following function:
$s(B) = |B|-1$, $\forall V \subseteq \calV$.
Then the {\em treewidth} of a hypergraph $\calH$, denoted by
$\tw(\calH)$, is exactly its $s$-width, and
the {\em fractional hypertree width} of a hypergraph $\calH$,
denoted by $\fhtw(\calH)$, is the $\rho^*$-width of $\calH$.
\edefn

From the above definitions, $\fhtw(\calH)\geq 1$ for any hypergraph $\calH$.
Moreover, $\fhtw(\calH)=1$ if and only if $\calH$ is acyclic.

\subsection{Algebraic Structures}

In this section, we define some of the algebraic structures used in the paper.   First, we discuss the definition of a monoid.  A monoid is a semi-group with an identity element.   Formally, it is the following. 

\begin{definition}
Fix a set $S$ and let $\oplus$ be a binary operator $S \times S \rightarrow S$.  The set $S$ with $\oplus$ is a monoid if (1) the operator satisfies associativity; that is, $ (a \oplus b) \oplus c = a \oplus (b \oplus c)$ for all $a,b,c \in S$ and (2) there is identity element $e \in S$ such that for all $a \in S$, it is the case that $e \oplus a = a \oplus e = e$.

A commutative monoid is a monoid where the operator $\oplus$ is commutative.   That is $a \oplus b = b \oplus a$ for all $a,b \in S$.
\end{definition}

Next, we define a semiring.

\begin{definition}
A semiring is a tuple $(R,\oplus,\otimes,I_0,I_1)$. The $\oplus$ operator is referred to as addition and the $\otimes$ is referred to as multiplication. The elements $I_0$ and $I_1$ are reffered as $0$ element and $1$ element and both are included in $R$. The tuple $(R,\oplus,\otimes,I_0,I_1)$ is a semiring if, 

\begin{enumerate}
\item it is the case that $R$ and $\oplus$ are a commutative monoid with $I_0$ as the identity.
\item $R$ and $\otimes$ is a monoid with identity $I_1$.
\item the multiplication distributes over addition.  That is for all $a,b,c \in R$ it is the case that $a \otimes (b \oplus c) = (a\otimes b) \oplus (a \otimes c)$ and $ (b \oplus c) \otimes a  = (b\otimes a) \oplus (c \otimes a)$.
\item the $I_0$ element annihilates $R$. That is, $a \otimes I_0 = I_0$ and $I_0\otimes a = I_0$ for all $a \in R$.
\end{enumerate}

A commutative  semiring is a semiring where the multiplication is commutative.  That is, $a \otimes b = b \otimes a$ for all $a,b \in S$.
\end{definition}

\end{document}